\documentclass[
reprint,
amsmath,amssymb,
aps, pre,
]{revtex4-2}

\usepackage[title]{appendix}
\usepackage{graphicx}
\usepackage{dcolumn}
\usepackage{bm}
\usepackage{bbold}
\usepackage[hidelinks]{hyperref}
\usepackage[capitalise]{cleveref}
\crefname{appendix}{Appendix}{Appendices}
\Crefname{appendix}{Appendix}{Appendices}
\usepackage{lipsum}
\usepackage[dvipsnames]{xcolor}
\usepackage{amsfonts}
\usepackage{yfonts}
\usepackage{booktabs}

\usepackage{amsthm}
\usepackage[scr=boondoxo,scrscaled=1.05]{mathalfa}
\theoremstyle{definition}
\newtheorem{lemma}{Lemma}

\newtheorem{theorem}{Theorem}

\newcommand{\rmd}{\mathrm{d}}

\DeclareFontFamily{U}{futm}{}
\DeclareFontShape{U}{futm}{m}{n}{
  <-> s * [.97534] fourier-bb 
  }{}
\DeclareMathAlphabet{\mathbbs}{U}{futm}{m}{n}

\begin{document}

\preprint{APS/123-QED}

\title{Mutual linearity in and out of stationarity for Markov jump processes: A trajectory-based approach}
\author{Jiming Zheng} \email{jiming@unc.edu}
\author{Zhiyue Lu} \email{zhiyuelu@unc.edu}
\affiliation{Department of Chemistry, University of North Carolina-Chapel Hill, NC}

\date{\today}

\begin{abstract}
Nonequilibrium response theory is a fundamental framework for understanding how physical systems respond to perturbations. Recently, a mutual linearity has been discovered for Markov jump processes using linear algebra analysis. This mutual linearity states that two observables are linearly dependent on each other in the long-time limit when the transition rate of a single edge is altered. It has also been extended to non-stationary cases for current observables. In this work, we provide a trajectory-based derivation of mutual linearity utilizing the trajectory-level linear response theory. The trajectory approach allows us to generalize the mutual linearity to non-stationary relaxation dynamics for state observables and counting observables. Our results shed light on the fundamental response properties far from equilibrium and the trajectory-level origin of mutual linearity. Our trajectory-based approach makes it possible to generalize the mutual linearity to a broader class of systems, including diffusion processes and open quantum systems.
\end{abstract}

\maketitle

\section{Introduction}

Understanding nonequilibrium systems through their response to perturbations is a central problem in physics. Response theory provides a direct route to probe driven systems and has found wide applications in complex physical and biological systems, including sensitivity \cite{mora2015physical,bialek2005physical,hartich2016sensory}, adaptation \cite{wark2007sensory,lan2012energy,conti2022nonequilibrium}, and robustness \cite{pittendrigh1954temperature,johnson2021circadian,hogenesch2011understanding,ay2007geometric,fu2024temperature}. A particularly powerful approach to nonequilibrium response is the trajectory-level description of stochastic dynamics. Early work by Christian Maes and collaborators established that the linear response can be formulated directly at the level of stochastic trajectories \cite{maes2020response,maes2013fluctuation,baiesi2009fluctuations,baiesi2013update}, revealing a general structure of response beyond equilibrium. From the trajectory perspective, a variety of universal response relations \cite{seifert2010fluctuation,pagare2024stochastic,zheng2025nonequilibrium} and fluctuation-response inequalities \cite{dechant2020fluctuation,zheng2025nonlinear,zheng2025universal,zheng2025unified,zheng2026thermodynamic,kwon2025fluctuation,lee2021universal,dechant2025finite,hasegawa2019uncertainty,van2025fundamental} have been uncovered in recent years. In parallel, linear algebraic methods \cite{aslyamov2025nonequilibrium,aslyamov2026macroscopic,ptaszynski2026nonequilibrium,harunari2024mutual,bebon2026mutual} and matrix tree theorems \cite{harunari2024mutual,bebon2026mutual,chun2023trade,fernandes2023topologically,owen2020universal} have been used to derive exact response relations and inequalities for Markov jump processes, providing a complementary perspective on nonequilibrium responses.

Among recent developments, a remarkable property known as \emph{mutual linearity} has been identified for Markov jump processes \cite{harunari2024mutual,floyd2025local,bebon2026mutual}. It states that when a single transition rate is perturbed, different steady-state observables become linearly dependent. More recently, this structure has been extended to multi-edge perturbations \cite{dal2025mutual}, and interpreted through first-passage times \cite{khodabandehlou2025affine}. Although this result has been derived using linear algebraic methods at the generator level \cite{aslyamov2024general} or the network analysis \cite{polettini2025coplanarity}, its physical origin from the trajectory perspective remains unclear. In particular, it is not evident why responses of distinct trajectory observables should exhibit such a universal linear structure.

In this work, we develop a trajectory-level derivation of mutual linearity based on the Doob-Meyer decomposition. Within this framework, linear response is expressed in terms of correlations with martingale noise, directly linking response properties to stochastic fluctuations along trajectories. We show that mutual linearity arises from a simple multiplicative structure in the response kernel associated with transition probabilities. This perspective clarifies the trajectory-level origin of mutual linearity and provides a transparent interpretation of its universality.

Moreover, the trajectory-based formulation allows us to extend the mutual linearity relation from steady states into non-stationary relaxation dynamics governed by time-independent dynamical rates. By analyzing the Laplace-transformed response, we demonstrate that the linear dependence between observables persists in the frequency domain for non-stationary relaxation dynamics. This reveals that mutual linearity is a dynamical property of the response rather than a feature restricted to steady-state averages. Finally, since trajectory-level techniques are well developed for diffusion processes and open quantum systems, our approach provides a possible route to extending mutual linearity to continuous and quantum stochastic systems.

The paper is organized as follows. In \cref{sec: doob-meyer}, we introduce the Doob-Meyer decomposition and derive correlation functions of trajectory increments. In \cref{sec: response correlations}, we formulate linear response in terms of these correlations. We then establish mutual linearity in the long-time limit in \cref{sec: mutual linearity stationary}. Next, we extend the analysis to the frequency domain and demonstrate non-stationary mutual linearity in \cref{sec: mutual linearity frequency}. Finally, we present numerical verification using simple exclusion process models in \cref{sec: numerical}.

\section{Doob-Meyer decomposition for Markov jump processes} \label{sec: doob-meyer}

In this work, we focus on Markov jump processes, which are a class of stochastic processes that evolve in continuous time and have discrete state spaces. A Markov jump process on $N$ states can be described by the master equation:
\begin{equation}
  \frac{\rmd \boldsymbol{p}(t)}{\rmd t} = R \boldsymbol{p}(t),
\end{equation}
where $\boldsymbol{p}(t) = (p_1(t), p_2(t), \cdots, p_N(t))^\top$ is the probability distribution over the states and $R = \{r_{ij}\}_{N\times N}$ is the transition rate matrix. The off-diagonal elements $r_{ij}$ represent the transition rates from state $j$ to state $i$, while the diagonal elements are defined as $r_{ii} = -\sum_{j \neq i} r_{ji}$ to ensure that the total probability is conserved. We assume the system is irreducible; thus, its steady state is unique. The steady state of the system is given by the eigenvector $\boldsymbol{\pi}$ of $R$ corresponding to the eigenvalue $0$, which satisfies $R \boldsymbol{\pi} = 0$. Throughout this work, we restrict our discussion to Markov jump processes with a finite number of states $N$.

A stochastic trajectory of the Markov jump process can be represented as a sequence of states and transition times, denoted by $X_\tau = \{(x_k, t_k)\}_{k=0}^n$, where $x_k$ is the state at time $t_k$ and $n$ is the total number of transitions up to time $\tau$. For a stochastic trajectory $X_\tau$, we can define a counting process $n_{ij}(\tau)$ that counts the number of transitions from state $j$ to state $i$ up to time $\tau$. It can be written as a time integral of $\rmd n_{ij}(t)$,
\begin{equation}
  n_{ij}(\tau) = \int_0^\tau \rmd n_{ij}(t),
\end{equation}
where $\rmd n_{ij}(t)$ is an increment equal to 1 if there is a transition from state $j$ to state $i$ at time $t$, and 0 otherwise. The counting process $n_{ij}(\tau)$ is a stochastic process that captures the number of transitions between states over time.

The transition number $n_{ij}(\tau)$ is a submartingale, i.e.,
\begin{equation}
  \langle n_{ij}(\tau) \mid X_{\tau'} \rangle \geq n_{ij}(\tau'), \quad \text{for } \tau \geq \tau',
\end{equation}
where $\langle A \mid B \rangle$ denotes the expectation of $A$ conditioned on the specified value of $B$. The Doob-Meyer decomposition \cite{meyer1962decomposition} allows us to express the submartingale $n_{ij}(\tau)$ as the sum of a predictable compensator and a martingale:
\begin{equation}
  n_{ij}(\tau) = \int_0^\tau r_{ij} \rmd \tau_j(t) + \varepsilon_{ij}(\tau),
\end{equation}
where $\int_0^\tau r_{ij} \rmd \tau_j(t)$ is left-continuous and thus mathematically predictable, $\rmd\tau_j(t)$ is the time spent in state $j$ during the interval $[0, \tau]$, and $r_{ij} \rmd \tau_j(t)$ represents the expected number of transitions from state $j$ to state $i$ during this time. The term $\varepsilon_{ij}(\tau)$ is a martingale that captures the fluctuations around this expected value. Its derivative is defined as the centered Poisson noise $\rmd \varepsilon_{ij}(t) \equiv \rmd n_{ij}(t) - r_{ij} \rmd \tau_j(t)$, which has zero mean and captures the stochastic fluctuations in the counting process.

The Doob-Meyer decomposition offers a stochastic differential equation representation of the Markov jump process:
\begin{equation}
  \rmd n_{ij}(t) = r_{ij} \rmd \tau_j(t) + \rmd \varepsilon_{ij}(t).
  \label{eq: SDE master equation}
\end{equation}
This equation has been used recently to derive thermodynamic uncertainty relations for Markov jump processes \cite{stutzer2025stochastic}. Here, we reformulate it in the language of Doob-Meyer decomposition. \cref{eq: SDE master equation} takes a similar form as the Langevin equation for diffusion processes. However, unlike Gaussian noise in the Langevin equation, the noise term $\rmd \varepsilon_{ij}(t)$ in \cref{eq: SDE master equation} obeys a Poisson distribution. $\rmd \varepsilon_{ij}(t)$ has zero mean and its correlation function is given by $\langle \rmd \varepsilon_{ij}(t) \rmd \varepsilon_{kl}(t') \rangle = \delta_{ik} \delta_{jl} \delta(t-t') r_{ij} p_j(t) \rmd t \rmd t'$, which follows from the property that the variance of a Poisson process is equal to its mean.

Now we examine correlations among the noise term $\rmd\varepsilon_{ij}(t)$, the dwelling time $\rmd\tau_j(t)$, and the transition number $\rmd n_{ij}(\tau)$. These can be straightforwardly computed using the probability distribution of $\rmd n_{ij}(t)$, $\rmd \tau_j(t)$, and the Markov property. One may refer \cite{stutzer2025stochastic} for a detailed proof.

\begin{lemma}(Noise-Noise correlations) \label{lemma: noise-noise}
  The correlations between $\rmd\varepsilon_{ij}(t)$ and $\rmd\varepsilon_{kl}(t')$ is given by
  \begin{equation}
      \langle \rmd\varepsilon_{ij}(t) \rmd\varepsilon_{kl}(t') \rangle = \delta_{ik} \delta_{jl} \delta(t-t') r_{ij} p_j(t) \rmd t \rmd t'.
  \end{equation}
\end{lemma}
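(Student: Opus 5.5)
The plan is to split the computation into two cases: distinct times $t\neq t'$, and equal times $t=t'$. The tools are the Markov property and the local statistics of the increments over an infinitesimal interval. Conditioned on the history $X_t$, the increment $\rmd n_{ij}(t)$ is a Bernoulli variable. It equals $1$ with probability $r_{ij}\,\mathbb{1}[x(t)=j]\,\rmd t + o(\rmd t)$. Likewise $\rmd\tau_j(t)=\mathbb{1}[x(t)=j]\,\rmd t$ is predictable, meaning it is determined by the history. Hence $\langle \rmd\varepsilon_{ij}(t)\mid X_t\rangle = o(\rmd t)$, which is the martingale property already noted after the Doob-Meyer decomposition.

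\textbf{Case $t\neq t'$.} Without loss of generality take $t'<t$. Then $\rmd\varepsilon_{kl}(t')$ is measurable with respect to $X_t$, i.e. fixed by the history up to time $t$. By the tower property,
\begin{equation}
\langle \rmd\varepsilon_{ij}(t)\rmd\varepsilon_{kl}(t')\rangle=\big\langle \rmd\varepsilon_{kl}(t')\,\langle \rmd\varepsilon_{ij}(t)\mid X_t\rangle\big\rangle=0 .
\end{equation}
This identity is exact for the integrated martingales: for disjoint intervals it gives the orthogonality of martingale increments. So the correlation is supported on $t=t'$, which produces the factor $\delta(t-t')$.

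\textbf{Case $t=t'$.} Here I would compute the conditional second moment over a single interval $[t,t+\rmd t)$. The product $\rmd\varepsilon_{ij}\rmd\varepsilon_{kl}$ expands into four terms, each with a definite order in $\rmd t$:
\begin{itemize}
\item $\rmd n_{ij}\rmd n_{kl}$. The probability of two jumps in one interval is $O(\rmd t^2)$. A single jump can be both $j\to i$ and $l\to k$ only if $(i,j)=(k,l)$. Since $\rmd n_{ij}^2=\rmd n_{ij}$ for a Bernoulli variable, this term equals $\delta_{ik}\delta_{jl}\,r_{ij}\mathbb{1}[x(t)=j]\,\rmd t + o(\rmd t)$.
\item The cross terms $r_{kl}\rmd\tau_l\,\rmd n_{ij}$ are $O(\rmd t^2)$.
\item The compensator product $r_{ij}r_{kl}\rmd\tau_j\rmd\tau_l$ is also $O(\rmd t^2)$.
\end{itemize}
Averaging over the history replaces $\mathbb{1}[x(t)=j]$ by $p_j(t)$. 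This gives $\langle(\rmd\varepsilon_{ij})(\rmd\varepsilon_{kl})\rangle=\delta_{ik}\delta_{jl}r_{ij}p_j(t)\,\rmd t$. This matches the quadratic variation $[\varepsilon_{ij}]=n_{ij}$ of a compensated counting process, and the fact that the variance of a Poisson increment equals its mean.

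\textbf{Combining the cases.} The covariance of the integrated martingales is
\begin{equation}
\langle\varepsilon_{ij}(s)\varepsilon_{kl}(s')\rangle=\delta_{ik}\delta_{jl}\int_0^{\min(s,s')}r_{ij}p_j(t)\,\rmd t ,
\end{equation}
which is equivalent, in the distributional sense, to the stated $\delta(t-t')\,\rmd t\,\rmd t'$ kernel.

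The main obstacle is rigor rather than computation. The expression $\rmd\varepsilon\,\rmd\varepsilon$ at equal times with a $\delta(t-t')\,\rmd t\,\rmd t'$ normalization is formal. I would handle it by proving the integrated covariance identity above: discretize $[0,s]$ into a partition, use the orthogonality of martingale increments, and sum the diagonal contributions. The $o(\rmd t)$ errors vanish in the limit because the rates are bounded and the state space is finite. Then differentiate in $s$ and $s'$.
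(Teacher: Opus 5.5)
Your proposal is correct and follows essentially the route the paper indicates: the paper gives no detailed proof. It only notes that the result follows from the Markov (martingale) property for $t\neq t'$, and from the Poisson fact that the variance of a counting increment equals its mean at $t=t'$, deferring details to Stutzer, Dieball, and Godec; your integrated identity $\langle\varepsilon_{ij}(s)\varepsilon_{kl}(s')\rangle=\delta_{ik}\delta_{jl}\int_0^{\min(s,s')}r_{ij}p_j(t)\,\rmd t$ is a sound way to give the formal $\delta(t-t')$ kernel a precise meaning.
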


This lemma states that the noise terms $\rmd\varepsilon_{ij}(t)$ and $\rmd\varepsilon_{kl}(t')$ are uncorrelated for $t \neq t'$, and their correlation at the same time is given by $\delta_{ik} \delta_{jl} \delta(t-t') r_{ij} p_j(t) \rmd t$. This is a direct consequence of the properties of the Poisson process.

\begin{lemma}(Noise-Time correlations) \label{lemma: noise-time}
  The correlations between $\rmd\varepsilon_{ij}(t)$ and $\rmd\tau_k(t')$ is given by
  \begin{align}
      \langle \rmd\varepsilon_{ij}(t) \rmd\tau_k(t') \rangle = \mathbb{1}_{t<t'} [P(k, t'| i, t)-P(k, t'| j, t)] \nonumber \\
      \times r_{ij}p_j(t)\rmd t\rmd t'.
  \end{align}
\end{lemma}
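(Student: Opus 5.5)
We prove the statement by splitting into the cases $t>t'$, $t=t'$, and $t<t'$. In each case we condition on the history up to the earlier of the two times and use the martingale property of $\varepsilon_{ij}$. Write $\rmd\tau_k(t') = \mathbb{1}_{x(t')=k}\,\rmd t'$ and $\rmd\varepsilon_{ij}(t) = \rmd n_{ij}(t) - r_{ij}\mathbb{1}_{x(t)=j}\,\rmd t$.

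\textbf{Case $t>t'$.} Here $\rmd\tau_k(t')$ is measurable with respect to the history $X_{t}$ up to time $t$. Since $\varepsilon_{ij}$ is a martingale, $\langle \rmd\varepsilon_{ij}(t)\mid X_t\rangle = 0$. Conditioning on $X_t$ therefore gives zero.

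\textbf{Case $t=t'$.} The product $\rmd\varepsilon_{ij}(t)\,\rmd\tau_k(t)$ is of order $\rmd t^2$ with no compensating singular factor. The jump part is a $\{0,1\}$ variable of probability $O(\rmd t)$, multiplied by $\rmd t$. Hence there is no $\delta(t-t')$ contribution, and the diagonal is negligible. This matches the strict indicator $\mathbb{1}_{t<t'}$ in the statement. (Contrast this with Lemma~\ref{lemma: noise-noise}, where the product of two jump increments does produce a $\delta$-term.)

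\textbf{Case $t<t'$.} This is the main computation. Condition on the state $x(t)$ and on what happens in $[t,t+\rmd t]$, then use the Markov property to propagate to $t'$. This gives
\[
\langle \rmd n_{ij}(t)\,\mathbb{1}_{x(t')=k}\rangle = r_{ij}\,p_j(t)\,\rmd t\,P(k,t'|i,t),
\]
since a $j\to i$ jump at $t$ occurs with probability $r_{ij}p_j(t)\rmd t$ and leaves the system in $i$. Similarly,
\[
\langle r_{ij}\mathbb{1}_{x(t)=j}\rmd t\,\mathbb{1}_{x(t')=k}\rangle = r_{ij}\,p_j(t)\,\rmd t\,P(k,t'|j,t).
\]
Subtracting the second from the first and multiplying by $\rmd t'$ yields the claimed expression. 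Note that the covariance equals the plain correlation here, because $\langle\rmd\varepsilon_{ij}\rangle=0$.

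The main obstacle is justifying the propagator in the first term. Given a jump at time $t$, the state just after $t$ is $i$, so the subsequent evolution is governed by $P(k,t'|i,t+\rmd t)$. Replacing $t+\rmd t$ by $t$ costs only $O(\rmd t)$, which is higher order. In the second term, one must check that conditioning on $x(t)=j$ without specifying what happens in $[t,t+\rmd t]$ correctly gives $P(k,t'|j,t)$, up to corrections of order $\rmd t$. Both points follow from continuity of $P$ in its initial time for finite-state chains, which I would state explicitly.
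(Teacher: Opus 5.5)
Your proposal is correct and follows essentially the route the paper indicates: the paper gives no detailed derivation, stating only that the correlation follows from the distribution of the increments and the Markov property (with the martingale property ruling out $t\ge t'$), and it defers the details to Stutzer et al. Your three-case argument supplies exactly that computation: the martingale property gives zero for $t>t'$, there is no $\delta$-contribution on the diagonal, and for $t<t'$ Markov propagation gives $P(k,t'|i,t)-P(k,t'|j,t)$ weighted by $r_{ij}p_j(t)\,\rmd t$.
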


\cref{lemma: noise-time} captures the correlation between the noise term $\rmd\varepsilon_{ij}(t)$ and the dwelling time $\rmd\tau_k(t')$ at a later time $t'$. The indicator function $\mathbb{1}_{t<t'}$ ensures that there is no correlation when $t \geq t'$, which is consistent with the martingale property $\langle \rmd \varepsilon_{ij}(t) \mid X_{t'} \rangle = 0$. The term $P(k, t'| i, t)$ represents the probability of being in state $k$ at time $t'$ given that the system was in state $i$ at time $t$, and similarly for $P(k, t'| j, t)$. The difference between these two probabilities captures how the noise term $\rmd\varepsilon_{ij}(t)$ influences the future state of the system, which in turn affects the dwelling time $\rmd\tau_k(t')$.

\begin{lemma}(Noise-Jump correlations) \label{lemma: noise-jump}
  The correlation between $\rmd\varepsilon_{ij}(t)$ and $\rmd n_{kl}(t')$ is given by
  \begin{align}
      \langle \rmd\varepsilon_{ij}(t) \rmd n_{kl}(t') \rangle = \mathbb{1}_{t<t'} r_{kl} [P(l, t'| i, t)-P(l, t'| j, t)] \nonumber \\
      \times r_{ij}p_j(t)\rmd t\rmd t' + r_{ij}p_j(t)\delta_{ik}\delta_{jl}\delta(t-t')\rmd t.
  \end{align}
\end{lemma}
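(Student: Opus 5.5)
The plan is to reduce the noise--jump correlation to the two previous lemmas using the Doob--Meyer representation \cref{eq: SDE master equation} for the later increment. Substituting $\rmd n_{kl}(t') = r_{kl}\,\rmd\tau_l(t') + \rmd\varepsilon_{kl}(t')$ and using bilinearity of the expectation gives
\begin{equation*}
\langle \rmd\varepsilon_{ij}(t)\,\rmd n_{kl}(t')\rangle = r_{kl}\langle \rmd\varepsilon_{ij}(t)\,\rmd\tau_l(t')\rangle + \langle \rmd\varepsilon_{ij}(t)\,\rmd\varepsilon_{kl}(t')\rangle .
\end{equation*}
\cref{lemma: noise-time} with $k\to l$ turns the first term into $\mathbb{1}_{t<t'} r_{kl}[P(l,t'|i,t)-P(l,t'|j,t)]\,r_{ij}p_j(t)\rmd t\rmd t'$. \cref{lemma: noise-noise} turns the second term into $\delta_{ik}\delta_{jl}\delta(t-t')r_{ij}p_j(t)\rmd t\rmd t'$. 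Integrating out the $\delta$ in $t'$ gives the equal-time contribution $r_{ij}p_j(t)\delta_{ik}\delta_{jl}\delta(t-t')\rmd t$ in the statement. The two pieces add to exactly the claimed formula.

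The regimes $t>t'$ and $t=t'$ need to be checked separately.

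For $t>t'$, the increment $\rmd n_{kl}(t')$ is measurable with respect to the past $X_t$. The martingale property $\langle \rmd\varepsilon_{ij}(t)\mid X_t\rangle=0$ then kills the correlation, which is consistent with the indicator $\mathbb{1}_{t<t'}$.

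For $t=t'$, $\rmd\tau_l(t)$ is predictable, so only the jump--jump part contributes. The product $\rmd\varepsilon_{ij}(t)\,\rmd n_{kl}(t)$ equals $\rmd n_{ij}(t)\rmd n_{kl}(t)$ up to terms of order $\rmd t^2$. This is nonzero only for the same transition, and then $(\rmd n_{ij})^2=\rmd n_{ij}$, whose mean is $r_{ij}p_j(t)\rmd t$. This independently confirms the singular term.

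For $t<t'$, a direct check conditions on $\rmd n_{ij}(t)\in\{0,1\}$ and on the state at $t$. A jump at $t$ puts the system in state $i$, so the later jump rate is $r_{kl}P(l,t'|i,t)$. If there is no jump and the system sits in $j$, the compensator $-r_{ij}\rmd t$ weights $r_{kl}P(l,t'|j,t)$. Keeping first order in $\rmd t$ reproduces the difference of propagators.

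The main obstacle is the consistent treatment of the equal-time singular contribution. The heuristic "$\delta(t-t')\rmd t\rmd t'$" notation conflates a density with a measure on the diagonal, so I would make this rigorous by working with integrated quantities $\int f(t)\rmd\varepsilon_{ij}(t)$ and $\int g(t')\rmd n_{kl}(t')$. I would then use the predictable quadratic covariation $\langle[\varepsilon_{ij},n_{kl}]\rangle=\delta_{ik}\delta_{jl}\int r_{ij}p_j\,\rmd t$. This justifies the $\delta$-term and fixes the indicator convention at $t=t'$.
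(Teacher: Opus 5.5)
Your proposal is correct and is essentially the paper's own argument: the paper also obtains this lemma by substituting $\rmd n_{kl}(t') = r_{kl}\,\rmd\tau_l(t') + \rmd\varepsilon_{kl}(t')$ and then combining \cref{lemma: noise-noise} and \cref{lemma: noise-time} by bilinearity. Your regime-by-regime checks and your rigor remark are consistent extras rather than a different route. The rigor remark notes that the equal-time term is really $\delta_{ik}\delta_{jl}\delta(t-t')\,r_{ij}p_j(t)\,\rmd t\,\rmd t'$, so the statement's $\delta(t-t')\rmd t$ is notational shorthand and "integrating out" the $\delta$ is not quite the right description.
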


\cref{lemma: noise-jump} is obtained by combining \cref{lemma: noise-noise} and \cref{lemma: noise-time} with the definition $\rmd \varepsilon_{ij} = \rmd n_{ij} - r_{ij}\rmd t_j$. The first term on the right-hand side of \cref{lemma: noise-jump} captures the correlation between $\rmd\varepsilon_{ij}(t)$ and $\rmd n_{kl}(t')$ when $t<t'$, and the second term captures the correlation when $t=t'$. When $t>t'$, there is no correlation between $\rmd\varepsilon_{ij}(t)$ and $\rmd n_{kl}(t')$ due to the martingale property.

\section{Trajectory level linear response theory} \label{sec: response correlations}

To provide a proof for the mutual linearity directly based on the trajectory-level response theory, we first revisit the linear response theory for Markov jump processes derived in \cite{seifert2010fluctuation,zheng2025unified}. Here, the term ``linear response'' refers to the first derivative of an observable with respect to a control parameter around a given reference dynamics. The reference dynamics is not assumed to be close to equilibrium; it can be an arbitrary nonequilibrium steady state or a transient state. Thus, the response theory used below is linear only in the perturbation amplitude, not in the sense of near equilibrium systems.

The stochastic trajectory $X_\tau$ of a Markov jump process can be described by the sequence of states and transition times $X_\tau = \{(x_k, t_k)\}_{k=0}^n$. The probability of observing a particular trajectory $X_\tau$ can be expressed in terms of the transition rates and the dwelling times in each state. Specifically, the probability density of a trajectory can be written as \cite{peliti2021stochastic}:
\begin{align}
  \mathcal{P}[X_\tau] &= p_{x_0}(0) \prod_{k=1}^n r_{x_k x_{k-1}} \prod_{k=0}^n e^{\int_{t_k}^{t_{k+1}} r_{x_k x_k} \rmd t} \\
  &= p_{x_0}(0) \exp\sum_{i \neq j} \left[ \int_0^\tau \rmd n_{ij}(t) \ln r_{ij} - r_{ij} \rmd \tau_j(t) \right],
\end{align}
where $p_{x_0}(0)$ is the initial probability of being in state $x_0$, $t_{n+1} = \tau$ is the final time, and we group the terms involving the transition of dwelling events on the same state together in the second line. A trajectory observable $Q[X_\tau]$ is a functional of the trajectory, which can depend on the sequence of states and transition times. The expectation value of $Q$ can be computed as:
\begin{equation}
  \langle Q(\tau) \rangle = \int \mathcal{D}[X_\tau] \mathcal{P}[X_\tau] Q[X_\tau],
\end{equation}
where $\int \mathcal{D}[X_\tau]$ denotes the integration over all possible trajectories.

The linear response is defined as the first-order derivative of $\langle Q(\tau) \rangle$ with respect to a perturbation parameter $\lambda$ that modifies the transition rates $r_{ij}(\lambda)$. For example, the parameter $\lambda$ can influence the temperature of the heat bath, the chemical potential of the particle reservoir, or the external force applied to the system. We assume that the initial distribution is fixed and independent of the perturbation parameter. Additionally, the perturbed and unperturbed systems share the same initial distribution. The linear response can be computed using the path integral representation of $\langle Q(\tau) \rangle$:
\begin{equation}
  \frac{\rmd \langle Q(\tau) \rangle}{\rmd \lambda} = \int \mathcal{D}[X_\tau] \mathcal{P}[X_\tau] \left( \frac{\rmd \ln \mathcal{P}[X_\tau]}{\rmd \lambda} Q[X_\tau] \right),
\end{equation}
where we assume that the observable $Q$ does not explicitly depend on $\lambda$. The term $\frac{\rmd \ln \mathcal{P}[X_\tau]}{\rmd \lambda}$ can be computed from the expression of $\mathcal{P}[X_\tau]$:
\begin{align}
  \frac{\rmd \ln \mathcal{P}[X_\tau]}{\rmd \lambda} &= \sum_{i \neq j} \left[ \int_0^\tau \rmd n_{ij}(t) \frac{\rmd \ln r_{ij}}{\rmd \lambda} - \int_0^\tau \frac{\rmd r_{ij}}{\rmd \lambda} \rmd \tau_j(t) \right] \nonumber \\
  &= \int_0^\tau \sum_{i \neq j} \frac{\rmd \ln r_{ij}}{\rmd \lambda} \rmd\varepsilon_{ij}(t).
\end{align}
Therefore, the linear response can be expressed in terms of the correlation between the observable $Q$ and the noise term $\rmd\varepsilon_{ij}(t)$:
\begin{equation}
  \frac{\rmd \langle Q(\tau) \rangle}{\rmd \lambda} = \sum_{i \neq j} \int_0^\tau \frac{\rmd \ln r_{ij}}{\rmd \lambda} \langle Q(\tau) \rmd \varepsilon_{ij}(t) \rangle.
\end{equation}
This expression has been obtained in \cite{zheng2025unified}. It shows that the linear response of the observable $Q$ to the perturbation can be computed as the correlation between $Q$ and the noise term $\rmd\varepsilon_{ij}(t)$, which captures the fluctuations in the counting process around its expected value. Specifically, for $\lambda = r_{ij}$, we have $\frac{\rmd \ln r_{ij}}{\rmd \lambda} = \frac{1}{r_{ij}}$, and the linear response becomes:
\begin{equation}
  \frac{\rmd \langle Q(\tau) \rangle}{\rmd r_{ij}} = \int_0^\tau \frac{1}{r_{ij}} \langle Q(\tau) \rmd \varepsilon_{ij}(t) \rangle.
\end{equation}

\section{Mutual linearity in steady states} \label{sec: mutual linearity stationary}

In this work, we consider trajectory observables that are linear functionals of dwelling times and counting statistics, such as:
\begin{equation}
  Q[X_\tau] = \int_0^\tau \sum_k a_k \rmd \tau_k(t) + \sum_{k \neq l} b_{kl} \rmd n_{kl}(t),
\end{equation}
where $a_{k}$ and $b_{kl}$ are coefficients that determine how the observable depends on the transitions and dwelling times. The observable becomes a state observable when $b_{kl} = 0$ for all $k, l$, and becomes a counting observable when $a_k = 0$ for all $k$. The current observable is a special case of the counting observable where $b_{kl} = -b_{lk}$ for all $k,l$. The steady state mutual linearity states that two observables $Q_1$ and $Q_2$ are linearly dependent on each other in the long time limit, which means that the amount of change in $\langle Q_1(\tau) \rangle$ due to a perturbation on the transition rate $r_{ij}$ is proportional to the amount of change in $\langle Q_2(\tau) \rangle$ due to the same perturbation, with a proportionality constant that is independent of the perturbation on $r_{ij}$. This result has been derived in \cite{harunari2024mutual,bebon2026mutual} using linear algebraic methods. Here, we provide an alternative proof based on the trajectory-level linear response theory.

For simplicity, we first consider two state observables $Q_1[X_\tau] = \int_0^\tau \rmd \tau_k(t)$, $Q_2[X_\tau] = \int_0^\tau \rmd \tau_{l}(t)$, and the perturbation on transition rate $r_{ij}$ where $i$ and $j$ are different from $k$ and $l$. Using \cref{lemma: noise-time}, the linear response of $\langle Q_1(\tau) \rangle$ is given by:
\begin{subequations}
\begin{align}
  &\frac{\rmd \langle Q_1(\tau) \rangle}{\rmd r_{ij}} = \int_0^\tau \rmd t \int_0^\tau \rmd t' \frac{1}{r_{ij}} \frac{\langle \rmd \tau_k(t') \rmd \varepsilon_{ij}(t) \rangle}{\rmd t' \rmd t} \\
  ={}& \int_0^\tau \rmd t \int_t^\tau \rmd t' ~ p_j(t) [P(k, t'| i, t) - P(k, t'| j, t)],
\end{align}
\end{subequations}

We can change the variable from $t'$ to $s = t'-t$ and rewrite the linear response $\rmd_{r_{ij}}\langle Q_1(\tau) \rangle$ as:
\begin{equation}
   \int_0^\tau \rmd t \int_0^{\tau-t} \rmd s ~ p_j(t) [P(k, s+t| i, t) - P(k, s+t| j, t)].
\end{equation}
For the system with time-independent rates, the transition probabilities only depend on the time difference $t'-t$. Therefore, we have $P(k, s+t| i, t) \approx P(k, s| i)$ and $P(k, s+t| j, t) \approx P(k, s| j)$. Furthermore, in the long time limit $\tau \to \infty$, the state probability $p_j(t)$ becomes time-independent, $\pi_j$. Thus, the linear response in the long-time limit can be computed as:
\begin{align}
  &\lim_{\tau \to \infty} \frac{\rmd \langle Q_1(\tau) \rangle}{\rmd r_{ij}} \nonumber \\
  ={}& \lim_{\tau \to \infty} \int_0^{\tau-s} \rmd t \int_0^\tau \rmd s ~ p_j(t) [P(k, s| i) - P(k, s| j)] \\
  ={}& \lim_{\tau \to \infty} \pi_j \int_0^\tau \rmd s ~ (\tau-s) [P(k, s| i) - P(k, s| j)] \\
  ={}& \lim_{\tau \to \infty} \pi_j \tau \int_0^\tau \rmd s ~ [P(k, s| i) - P(k, s| j)] + \mathcal{O}(1),
\end{align}
where we have used the fact that $p_j(t) \to \pi_j$ as $t \to \infty$ in the steady state. The term $(\tau-s)$ arises from the integral $\int_0^{\tau-s} \rmd t$. In the last line, we have separated the leading order term that scales with $\tau$ and the subleading term of order $\mathcal{O}(1)$. The linear response of $\langle Q_2(\tau) \rangle$ can be obtained by replacing $k$ with $l$ in the above expression. 

The fundamental matrix $Z$ is defined as
\begin{equation}
  Z = \int_0^\infty \left( e^{Rt} - \boldsymbol{\pi}\boldsymbol{1}^\top \right) \rmd t,
\end{equation}
where $\boldsymbol{\pi}$ is the column vector of the steady state distribution and $\boldsymbol{1}^\top$ is the row vector of ones. With this definition, we can express the ratio of the response in the long-time limit as
\begin{subequations}
\begin{align}
  \lim_{\tau \to \infty}\frac{\rmd_{r_{ij}} \langle Q_1(\tau) \rangle}{\rmd_{r_{ij}} \langle Q_2(\tau) \rangle} &= \frac{\int_0^\infty [P(k, s| i) - P(k, s| j)] \rmd s}{\int_0^\infty [P(l, s| i) - P(l, s| j)] \rmd s} \\
  &= \frac{Z_{ki} - Z_{kj}}{Z_{li} - Z_{lj}} \\
  &\equiv \chi_{ij}^{kl}.
\end{align}
\end{subequations}

We now state the key result that the ratio $\chi_{ij}^{kl}$ is independent of the perturbed transition rate $r_{ij}$. The detailed proof is in \cref{AppSec:proof_of_theorem1}.
\begin{theorem} \label{theorem: dchi_ij^kl/dr=0}
  For observables that excludes the transition from $j$ to $i$, the ratio $\chi_{ij}^{kl}$ is independent of the transition rate $r_{ij}$, i.e., $\rmd \chi_{ij}^{kl} / \rmd r_{ij} = 0$.
\end{theorem}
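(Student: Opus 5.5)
The plan is to view the numerator and denominator of $\chi_{ij}^{kl}$ as two components of the single vector
\[
\boldsymbol{v} \equiv Z(\boldsymbol{e}_i - \boldsymbol{e}_j),
\]
whose $k$-th component is $Z_{ki}-Z_{kj}$ and whose $l$-th component is $Z_{li}-Z_{lj}$. I will show that varying $r_{ij}$ only rescales $\boldsymbol{v}$ by a scalar, $\rmd \boldsymbol{v}/\rmd r_{ij} = c\,\boldsymbol{v}$. Then every ratio of two components of $\boldsymbol{v}$, and in particular $\chi_{ij}^{kl}$, is independent of $r_{ij}$.

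\textbf{Step 1 (algebraic characterization of $Z$).} I would first record the identities that follow directly from the definition of $Z$ and from $e^{Rt}\to\boldsymbol{\pi}\boldsymbol{1}^\top$:
\[
RZ = ZR = \boldsymbol{\pi}\boldsymbol{1}^\top - I, \qquad \boldsymbol{1}^\top Z = 0, \qquad Z\boldsymbol{\pi} = 0 .
\]
Together with irreducibility, these identities determine $Z$ uniquely. Indeed, $\ker R = \mathrm{span}(\boldsymbol{\pi})$ and $\boldsymbol{1}^\top\boldsymbol{\pi} = 1 \neq 0$. Hence a vector $\boldsymbol{w}$ with $R\boldsymbol{w} = \boldsymbol{b}$ and $\boldsymbol{1}^\top\boldsymbol{w} = 0$ is unique whenever it exists. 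This uniqueness is the key tool for the next step.

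\textbf{Step 2 (derivative of $Z$ along the perturbation).} Set $\boldsymbol{u} = \boldsymbol{e}_i - \boldsymbol{e}_j$. Since $r_{ij}$ enters $R$ both off-diagonally and through the diagonal $r_{jj}$, we have
\[
\frac{\rmd R}{\rmd r_{ij}} = \boldsymbol{u}\,\boldsymbol{e}_j^\top ,
\]
which is a rank-one matrix. Differentiating $RZ = \boldsymbol{\pi}\boldsymbol{1}^\top - I$ gives
\[
\boldsymbol{u}\boldsymbol{e}_j^\top Z + R\,\frac{\rmd Z}{\rmd r_{ij}} = \frac{\rmd\boldsymbol{\pi}}{\rmd r_{ij}}\boldsymbol{1}^\top .
\]
Applying this to $\boldsymbol{u}$ and using $\boldsymbol{1}^\top\boldsymbol{u} = 0$ yields
\[
R\,\frac{\rmd Z}{\rmd r_{ij}}\boldsymbol{u} = -c\,\boldsymbol{u}, \qquad c \equiv \boldsymbol{e}_j^\top Z\boldsymbol{u} = Z_{ji} - Z_{jj}.
\]
Differentiating $\boldsymbol{1}^\top Z = 0$ gives $\boldsymbol{1}^\top(\rmd Z/\rmd r_{ij})\boldsymbol{u} = 0$.

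On the other hand, $\boldsymbol{v} = Z\boldsymbol{u}$ satisfies $R\boldsymbol{v} = (\boldsymbol{\pi}\boldsymbol{1}^\top - I)\boldsymbol{u} = -\boldsymbol{u}$ and $\boldsymbol{1}^\top\boldsymbol{v} = 0$. By the uniqueness in Step 1, we conclude
\[
\frac{\rmd \boldsymbol{v}}{\rmd r_{ij}} = \frac{\rmd Z}{\rmd r_{ij}}\boldsymbol{u} = c\,\boldsymbol{v}.
\]
Note that $\boldsymbol{u}$ itself does not depend on $r_{ij}$, so $\rmd\boldsymbol{v}/\rmd r_{ij} = (\rmd Z/\rmd r_{ij})\boldsymbol{u}$.

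\textbf{Step 3 (conclusion).} Write $\chi_{ij}^{kl} = v_k/v_l$. Then
\[
\frac{\rmd\chi_{ij}^{kl}}{\rmd r_{ij}} = \frac{c\,v_k v_l - v_k\,c\,v_l}{v_l^2} = 0
\]
wherever $v_l \neq 0$. Equivalently, $\boldsymbol{v}(r_{ij}) = \boldsymbol{v}(r_{ij}^0)\exp\!\big(\int c\,\rmd r_{ij}\big)$, so the direction of $\boldsymbol{v}$ is frozen. The hypothesis that the observables exclude the $j\to i$ transition is what makes the observables' long-time responses reduce to components of $\boldsymbol{v}$ alone. For a general linear observable with coefficients $a_k, b_{kl}$, the response is a fixed linear combination of the $v_m$ whose coefficients do not contain $r_{ij}$, so the same cancellation applies. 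If the $b_{ij}$ term were included, an extra explicit $r_{ij}$-dependence would enter and break the argument.

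I expect the main obstacle to be Step 2: identifying the correct uniqueness argument so that $\rmd Z/\rmd r_{ij}$ never has to be computed explicitly. In particular, care is needed to keep track of the unknown term $\rmd\boldsymbol{\pi}/\rmd r_{ij}$, which is annihilated only because $\boldsymbol{1}^\top\boldsymbol{u} = 0$. A minor technical point is that the ratio must be well defined, which requires $v_l \neq 0$, i.e., $Z_{li} \neq Z_{lj}$.
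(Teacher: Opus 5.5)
Your proposal is correct and follows the paper's proof essentially step for step. It uses the same vector $\boldsymbol{v}=Z(\boldsymbol{e}_i-\boldsymbol{e}_j)$, the same rank-one derivative $\rmd R/\rmd r_{ij}=E_{ij}-E_{jj}$, the same cancellation of the $\rmd\boldsymbol{\pi}/\rmd r_{ij}$ term via $\boldsymbol{1}^\top(\boldsymbol{e}_i-\boldsymbol{e}_j)=0$, and arrives at the same key identity $\rmd\boldsymbol{v}/\rmd r_{ij}=v_j\boldsymbol{v}$ with $v_j=Z_{ji}-Z_{jj}$. The one difference is that you identify $(\rmd Z/\rmd r_{ij})(\boldsymbol{e}_i-\boldsymbol{e}_j)$ by uniqueness of the zero-sum solution of $R\boldsymbol{w}=\boldsymbol{b}$, whereas the paper left-multiplies by $Z$ and uses $ZR=\boldsymbol{\pi}\boldsymbol{1}^\top-I$ to get the explicit formula $\rmd Z/\rmd r_{ij}=Z(\rmd R/\rmd r_{ij})Z-Z(\rmd\boldsymbol{\pi}/\rmd r_{ij})\boldsymbol{1}^\top$, which is simply the constructive form of the same uniqueness.
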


\cref{theorem: dchi_ij^kl/dr=0} shows that the ratio $\chi_{ij}^{kl}$ is independent of the transition rate $r_{ij}$, which implies that the linear responses of $\langle Q_1(\tau) \rangle$ and $\langle Q_2(\tau) \rangle$ to the perturbation on $r_{ij}$ are proportional in the long-time limit. Therefore, on any connected interval of the parameter $r_{ij}$ where $\rmd_{r_{ij}} \langle Q_2(\infty) \rangle$ does not vanish, the ratio of the linear responses is constant and equal to $\chi_{ij}^{kl}$. This gives the mutual linearity between $Q_1$ and $Q_2$ in the steady state:
\begin{equation}
  \left.\langle Q_1(\infty) \rangle\right|_{r_{ij}} = \chi_{ij}^{kl} \left.\langle Q_2(\infty) \rangle\right|_{r_{ij}} + \gamma_{ij}^{kl},
  \label{eq: mutual linearity dwelling}
\end{equation}
where $\langle Q_1(\infty) \rangle \equiv \lim_{\tau \to \infty} \frac{1}{\tau} \langle Q_1(\tau) \rangle$, $\langle Q_2(\infty) \rangle \equiv \lim_{\tau \to \infty} \frac{1}{\tau} \langle Q_2(\tau) \rangle$, and $\gamma_{ij}^{kl}$ is the intercept. The mutual linearity implies that the two observables $Q_1$ and $Q_2$ contain the same information about the system's response to the perturbation on $r_{ij}$. This result \cref{eq: mutual linearity dwelling} recovers the mutual linearity of the steady state probability distribution in \cite{bebon2026mutual} from the stochastic trajectory point of view.

Now we consider general time-averaged state-counting observables:
\begin{equation}
  Q_m[X_\tau] =  \int_0^\tau \sum_l a_l^{(m)} \rmd \tau_l(t) + \sum_{k \neq l / j \to i} b_{kl}^{(m)} \rmd n_{kl}(t),
  \label{eq: state-counting observable}
\end{equation}
where the summation in the second term excludes the transition from state $j$ to state $i$. The linear response of $\langle Q_m(\tau) \rangle$ to the perturbation on $r_{ij}$ can be computed by combining \cref{lemma: noise-time} and \cref{lemma: noise-jump}. In the long time limit, the linear response of $\langle Q_m(\tau) \rangle$ is given by
\begin{widetext}
\begin{subequations}
\begin{align}
  \lim_{\tau \to \infty} \frac{\rmd \langle Q_m(\tau) \rangle}{\rmd r_{ij}} ={}& \lim_{\tau \to \infty} \int_0^\tau \rmd t \int_0^\tau \rmd t' \left[ \sum_l a_l^{(m)} \frac{\langle \rmd \tau_l(t') \rmd \varepsilon_{ij}(t) \rangle}{\rmd t' \rmd t} + \sum_{k \neq l / j \to i} b_{kl}^{(m)} \frac{\langle \rmd n_{kl}(t') \rmd \varepsilon_{ij}(t) \rangle}{\rmd t' \rmd t} \right] \\
  ={}& \lim_{\tau \to \infty} \int_0^\tau \rmd t \int_t^\tau \rmd t' \sum_l \left[ a_l^{(m)} + \sum_{k (\neq l) / j \to i} b_{kl}^{(m)} r_{kl} \right] p_j(t) [P(l, t'| i, t) - P(l, t'| j, t)]
  \label{eq: linear response} \\
  ={}& \lim_{\tau \to \infty} \pi_j \tau \sum_l c_l^{(m)} (Z_{li} - Z_{lj}) + \mathcal{O}(1),
\end{align}
\end{subequations}
\end{widetext}
where we defined $c_l^{(m)} = a_l^{(m)} + \sum_{k (\neq l) / j \to i} b_{kl}^{(m)} r_{kl}$ for simplicity, where the sum $\sum_{k (\neq l) / j \to i}$ is over all transitions from state $l$ to state $k$ except the transition from state $j$ to state $i$. It is worth noticing that $c_l$ is independent of the transition rate $r_{ij}$ but is dependent on all the other transition rates. The ratio of the linear response of $\langle Q_1(\tau) \rangle$ and $\langle Q_2(\tau) \rangle$ to the perturbation on $r_{ij}$ in the long time limit is given by
\begin{equation}
  \lim_{\tau \to \infty} \frac{\rmd_{r_{ij}} \langle Q_1(\tau) \rangle}{\rmd_{r_{ij}} \langle Q_2(\tau) \rangle} = \frac{\sum_l c_l^{(1)} (Z_{li} - Z_{lj})}{\sum_l c_l^{(2)} (Z_{li} - Z_{lj})} \equiv \chi_{ij}^{(1)(2)}.
\end{equation}
Now we state that the ratio $\chi_{ij}^{(1)(2)}$ is independent of the transition rate $r_{ij}$. Please, see \cref{AppSec:proof_of_theorem2} for detailed proof.
\begin{theorem} \label{theorem: dchi_ij^(1)(2)/dr=0}
  The ratio $\chi_{ij}^{(1)(2)}$ is independent of the transition rate $r_{ij}$, i.e., $\rmd \chi_{ij}^{(1)(2)} / \rmd r_{ij} = 0$.
\end{theorem}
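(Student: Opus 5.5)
The plan is to show that the vector $\boldsymbol{v}\equiv Z(\boldsymbol{e}_i-\boldsymbol{e}_j)$, whose components are $Z_{li}-Z_{lj}$, changes with $r_{ij}$ only by a scalar factor. Its direction is then fixed. Each $c_l^{(m)}$ is built from $a_l^{(m)}$ and from the rates $r_{kl}$ with $(k,l)\neq(i,j)$, so it does not depend on $r_{ij}$. Consequently, both the numerator $\sum_l c^{(1)}_l v_l$ and the denominator $\sum_l c^{(2)}_l v_l$ of $\chi_{ij}^{(1)(2)}$ pick up the same scalar factor, and the ratio is constant. \cref{theorem: dchi_ij^kl/dr=0} is the special case $c^{(1)}=\boldsymbol{e}_k$, $c^{(2)}=\boldsymbol{e}_l$, so it is covered by the same argument.

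\emph{Step 1: algebraic identities for $Z$.} From the definition $Z=\int_0^\infty(e^{Rt}-\boldsymbol{\pi}\boldsymbol{1}^\top)\rmd t$ and irreducibility (so that $e^{Rt}\to\boldsymbol{\pi}\boldsymbol{1}^\top$ exponentially fast), I would derive
\begin{equation*}
RZ=ZR=\boldsymbol{\pi}\boldsymbol{1}^\top-I,\qquad \boldsymbol{1}^\top Z=0,\qquad Z\boldsymbol{\pi}=0 .
\end{equation*}
These follow from $\boldsymbol{1}^\top R=0$, $R\boldsymbol{\pi}=0$, and $\int_0^\infty Re^{Rt}\rmd t=\boldsymbol{\pi}\boldsymbol{1}^\top-I$.

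\emph{Step 2: the perturbation is rank one.} Write $\boldsymbol{u}=\boldsymbol{e}_i-\boldsymbol{e}_j$. Since $r_{ij}$ enters $R$ both off the diagonal and through $r_{jj}$, we have $\partial R/\partial r_{ij}=\boldsymbol{u}\boldsymbol{e}_j^\top$. I then compute two derivatives.

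For the steady state, differentiating $R\boldsymbol{\pi}=0$ gives $R\,\rmd\boldsymbol{\pi}=-\pi_j\boldsymbol{u}$. Multiplying on the left by $Z$ and using $ZR=\boldsymbol{\pi}\boldsymbol{1}^\top-I$ together with $\boldsymbol{1}^\top\rmd\boldsymbol{\pi}=0$ yields $\rmd\boldsymbol{\pi}/\rmd r_{ij}=\pi_j Z\boldsymbol{u}$.

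For $Z$ itself, differentiating $RZ=\boldsymbol{\pi}\boldsymbol{1}^\top-I$ gives $\boldsymbol{u}\boldsymbol{e}_j^\top Z+R\,\rmd Z=\rmd\boldsymbol{\pi}\,\boldsymbol{1}^\top$. Multiplying on the left by $Z$, and using $\boldsymbol{1}^\top\rmd Z=0$ (from differentiating $\boldsymbol{1}^\top Z=0$), gives
\begin{equation*}
\frac{\rmd Z}{\rmd r_{ij}}=Z\boldsymbol{u}\,\boldsymbol{e}_j^\top Z-Z\frac{\rmd\boldsymbol{\pi}}{\rmd r_{ij}}\boldsymbol{1}^\top .
\end{equation*}

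\emph{Step 3: the key identity.} Apply the last formula to $\boldsymbol{u}$. Because $\boldsymbol{1}^\top\boldsymbol{u}=0$, the second term drops out, leaving
\begin{equation*}
\frac{\rmd (Z\boldsymbol{u})}{\rmd r_{ij}}=(Z_{ji}-Z_{jj})\,Z\boldsymbol{u}.
\end{equation*}
This linear ODE for $\boldsymbol{v}=Z\boldsymbol{u}$ integrates to $\boldsymbol{v}(r_{ij})=\exp\!\big(\int (Z_{ji}-Z_{jj})\rmd r_{ij}\big)\,\boldsymbol{v}(r_{ij}^0)$. Hence $\sum_l c^{(m)}_l v_l$ is a fixed number times a common positive scalar, for both $m=1$ and $m=2$.

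\emph{Step 4: the ratio.} Differentiating the quotient then gives
\begin{equation*}
\frac{\rmd\chi_{ij}^{(1)(2)}}{\rmd r_{ij}}=\frac{(Z_{ji}-Z_{jj})\big[(c^{(1)}\cdot\boldsymbol{v})(c^{(2)}\cdot\boldsymbol{v})-(c^{(1)}\cdot\boldsymbol{v})(c^{(2)}\cdot\boldsymbol{v})\big]}{(c^{(2)}\cdot\boldsymbol{v})^2}=0,
\end{equation*}
wherever the denominator is nonzero.

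The main obstacle I expect is Step 2, specifically obtaining $\rmd Z$ cleanly. $Z$ is a group inverse rather than an ordinary inverse, so the naive formula $-Z\,\rmd R\,Z$ must be corrected by the $\rmd\boldsymbol{\pi}$ term, and one has to use the constraints $\boldsymbol{1}^\top Z=0$ and $Z\boldsymbol{\pi}=0$ carefully. The saving observation is that this correction is annihilated by $\boldsymbol{u}$, since $\boldsymbol{1}^\top\boldsymbol{u}=0$. The other point to check explicitly is that no $c_l^{(m)}$ contains $r_{ij}$, which holds because the transition $j\to i$ is excluded from the counting part of \cref{eq: state-counting observable}.
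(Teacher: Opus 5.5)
Your proposal is correct and follows the paper's proof essentially step for step. You differentiate $RZ=\boldsymbol{\pi}\boldsymbol{1}^\top-I$ and use $\boldsymbol{1}^\top Z=\boldsymbol{0}^\top$ to get $\rmd Z/\rmd r_{ij}=Z(E_{ij}-E_{jj})Z-Z(\rmd\boldsymbol{\pi}/\rmd r_{ij})\boldsymbol{1}^\top$; you remove the correction term with $\boldsymbol{1}^\top(\boldsymbol{e}_i-\boldsymbol{e}_j)=0$ to obtain $\rmd\boldsymbol{v}/\rmd r_{ij}=\boldsymbol{v}_j\boldsymbol{v}$; and you finish with the quotient rule, using that the $c_l^{(m)}$ do not depend on $r_{ij}$. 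Your explicit $\rmd\boldsymbol{\pi}/\rmd r_{ij}$ and the integrated ODE are harmless extras, but your remark about a ``naive formula $-Z\,\rmd R\,Z$'' has a sign slip: $Z$ acts like $-R^{-1}$, so the naive term is $+Z\,\rmd R\,Z$, exactly as in your derived formula.
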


\cref{theorem: dchi_ij^(1)(2)/dr=0} shows that the ratio $\chi_{ij}^{(1)(2)}$ is independent of the transition rate $r_{ij}$, which means that the linear response of state-counting observables $\langle Q_1(\infty) \rangle$ and $\langle Q_2(\infty) \rangle$ to the perturbation on $r_{ij}$ are proportional to each other with a proportionality constant $\chi_{ij}^{(1)(2)}$ that does not depend on $r_{ij}$. Notice that observables $\langle Q_1(\infty) \rangle$ and $\langle Q_2(\infty) \rangle$ do not explicitly include the perturbed transition counts. More generally, the coefficients defining the observables may depend on the transition rates, as long as they do not depend on the perturbed rate $r_{ij}$. Therefore, the mutual-linearity relation can also apply to physically important thermodynamic observables, such as the local entropy production associated with edges other than the perturbed one. The mutual linearity can also be formulated as the linear dependence between $\langle Q_1(\infty) \rangle$ and $\langle Q_2(\infty) \rangle$ in the long time limit, which means that there exist constants $\chi_{ij}^{(1)(2)}$ and $\gamma_{ij}^{(1)(2)}$ such that
\begin{equation}
  \left.\langle Q_1(\infty) \rangle\right|_{r_{ij}} = \chi_{ij}^{(1)(2)} \left.\langle Q_2(\infty) \rangle\right|_{r_{ij}} + \gamma_{ij}^{(1)(2)},
  \label{eq: mutual linearity general}
\end{equation}
where $\gamma_{ij}^{(1)(2)}$ is a constant given by the intercept. The mutual linearity implies that the two observables $Q_1$ and $Q_2$ carry the same amount of information about the system's response to the perturbation on $r_{ij}$. This result recovers the mutual linearity for state-counting observables in \cite{bebon2026mutual} from the stochastic trajectory point of view.

\section{Frequency-domain mutual linearity in non-stationary relaxation dynamics} \label{sec: mutual linearity frequency}

The non-stationary relaxation here refers to the system that is relaxing from an initial distribution to the steady state with a time-independent transition rate matrix. The mutual linearity for current observables in the non-stationary regime has been derived in \cite{harunari2024mutual} using linear algebraic methods. Here, we provide an alternative proof based on the trajectory-level linear response theory and extend the mutual linearity to state-counting observables that do not explicitly include the perturbed transition counts to the non-stationary relaxation regime.

Consider the Laplace transform of the observable $\langle Q(\tau) \rangle$ with respect to time $\tau$:
\begin{equation}
    \hat{Q}(\omega) = \int_0^\infty e^{-\omega\tau} \langle Q(\tau) \rangle \rmd \tau,
\end{equation}
where $\omega$ is the Laplace variable. The Laplace transform $\hat{Q}(\omega)$ can be interpreted as the frequency domain representation of the observable $\langle Q(\tau) \rangle$. The linear response of $\hat{Q}(\omega)$ to a perturbation on the transition rate $r_{ij}$ can be computed as:
\begin{equation}
    \frac{\rmd \hat{Q}(\omega)}{\rmd r_{ij}} = \int_0^\infty e^{-\omega\tau} \frac{\rmd \langle Q(\tau) \rangle}{\rmd r_{ij}} \rmd \tau,
\end{equation}
where we change the order of differentiation and integration. Using the expression of $\rmd_{r_{ij}} \langle Q(\tau) \rangle$ obtained in \cref{eq: linear response}, we denote the response function as
\begin{equation}
  \mathcal{R}_{r_{ij}}(\tau, t) = \int_t^\tau \rmd t' \sum_l c_l p_j(t) [P(l, t'| i, t) - P(l, t'| j, t)],
\end{equation}
where $c_l = a_l + \sum_{k (\neq l) / j \to i} b_{kl} r_{kl}$ is a constant that depends on the coefficients of the state-counting observable. In this case, the finite-time linear response is given by $\rmd_{r_{ij}}\langle Q(\tau) \rangle = \int_0^\tau \mathcal{R}_{r_{ij}}(\tau, t) \rmd t$. Since the system is time-homogeneous, i.e., the original transition rate $r_{ij}$ is time-independent, the transition probabilities only depend on the time difference $t'-t$. Therefore, we can change the variable from $t'$ to $s = t'-t$ and rewrite the response function as
\begin{equation}
  \mathcal{R}_{r_{ij}}(\tau, t) = p_j(t) \sum_l c_l \int_0^{\tau-t} \rmd s ~ [P(l, s| i) - P(l, s| j)].
  \label{eq: response function state-counting}
\end{equation}
Note that the response function $\mathcal{R}_{r_{ij}}(\tau, t)$ depends on both $\tau$ and $t$ since $\boldsymbol{p}(t)$ is non-stationary.

The linear response of $\hat{Q}(\omega)$ to the perturbation on $r_{ij}$ can be expressed as the double integral of the response function $\mathcal{R}_{r_{ij}}(\tau, t)$:
\begin{subequations}
\begin{align}
  \frac{\rmd \hat{Q}(\omega)}{\rmd r_{ij}} &= \int_0^\infty \rmd \tau ~ e^{-\omega\tau} \int_0^\tau \rmd t ~ \mathcal{R}_{r_{ij}}(\tau, t) \\
  &= \int_0^\infty \rmd t \int_t^\infty \rmd \tau ~ e^{-\omega\tau} \mathcal{R}_{r_{ij}}(\tau, t),
\end{align}
\end{subequations}
where the second line swaps the order of integration. The Laplace transform of the response function $\mathcal{R}_{r_{ij}}(\tau, t)$ with respect to $\tau$ is given by
\begin{subequations}
\begin{align}
  \hat{\mathcal{R}}_{r_{ij}}(\omega, t) \equiv{}& \int_t^\infty \rmd \tau ~ e^{-\omega\tau} \mathcal{R}_{r_{ij}}(\tau, t) \\
  ={}& e^{-\omega t} p_j(t) \sum_l c_l \int_0^\infty \rmd u ~ e^{-\omega u} \\
  &\times \int_0^u ~ \rmd s [P(l, s| i) - P(l, s| j)] \\
  ={}& \frac{1}{\omega} e^{-\omega t} p_j(t) \sum_l c_l \left[ \hat{P}(l, \omega| i) - \hat{P}(l, \omega| j) \right],
\end{align}
\end{subequations}
where we change the variable from $\tau$ to $u = \tau-t$ and use the definition of the Laplace transform $\hat{P}(l, \omega| i) = \int_0^\infty e^{-\omega s} P(l, s| i) \rmd s$. The third equation comes from the fact that $\int_0^u \rmd s [P(l, s| i) - P(l, s| j)]$ is the convolution of $P(l, s| i) - P(l, s| j)$ and the Heaviside step function $\Theta(u-s)$, whose Laplace transform is given by $[\hat{P}(l, \omega| i) - \hat{P}(l, \omega| j)] / \omega$.
Therefore, the linear response of $\hat{Q}(\omega)$ to the perturbation on $r_{ij}$ can be expressed as
\begin{subequations}
\begin{align}
  \frac{\rmd \hat{Q}(\omega)}{\rmd r_{ij}} &= \int_0^\infty \rmd t ~ \hat{\mathcal{R}}_{r_{ij}}(\omega, t) \\
  &= \frac{1}{\omega} \sum_l c_l \left[ \hat{P}(l, \omega| i) - \hat{P}(l, \omega| j) \right] \int_0^\infty e^{-\omega t} p_j(t) \rmd t \\
  &= \frac{1}{\omega} \hat{p}_j(\omega) \sum_l c_l \left[ \hat{P}(l, \omega| i) - \hat{P}(l, \omega| j) \right],
\end{align}
\end{subequations}
where $\hat{p}_j(\omega) = \int_0^\infty e^{-\omega t} p_j(t) \rmd t$ is the Laplace transform of $p_j(t)$.

The ratio of the linear response of $\hat{Q}_1(\omega)$ and $\hat{Q}_2(\omega)$ to the perturbation on $r_{ij}$ is given by
\begin{equation}
  \frac{\rmd_{r_{ij}} \hat{Q}_1(\omega)}{\rmd_{r_{ij}} \hat{Q}_2(\omega)} = \frac{\sum_l c_l^{(1)} \left[ \hat{P}(l, \omega| i) - \hat{P}(l, \omega| j) \right]}{\sum_l c_l^{(2)} \left[ \hat{P}(l, \omega| i) - \hat{P}(l, \omega| j) \right]} \equiv \hat{\chi}_{ij}^{(1)(2)}(\omega).
\end{equation}
We now state that the ratio $\hat{\chi}_{ij}^{(1)(2)}(\omega)$ is independent of the transition rate $r_{ij}$. The detailed proof can be found in \cref{AppSec:proof_of_theorem3}.
\begin{theorem} \label{theorem: dchi_ij^(1)(2)(omega)/dr=0}
  For $\operatorname{Re}\omega > 0$, the ratio $\hat{\chi}_{ij}^{(1)(2)}(\omega)$ is independent of the transition rate $r_{ij}$, i.e., $\rmd \hat{\chi}_{ij}^{(1)(2)}(\omega) / \rmd r_{ij} = 0$, for observables that exclude transitions from $j$ to $i$.
\end{theorem}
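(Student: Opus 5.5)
The plan is to write the Laplace-domain propagator as a resolvent and track how it changes with $r_{ij}$ through a rank-one update. Since the rates are time-independent, $P(l,s|i) = (e^{Rs})_{li}$. Hence for $\operatorname{Re}\omega>0$,
\begin{equation}
G(\omega) \equiv \int_0^\infty e^{-\omega s} e^{Rs}\,\rmd s = (\omega I - R)^{-1},
\end{equation}
with $\hat P(l,\omega|i) = G_{li}(\omega)$. The resolvent exists because every eigenvalue of $R$ has nonpositive real part. With the vector $\boldsymbol{u} = \boldsymbol{e}_i - \boldsymbol{e}_j$, the numerator of $\hat\chi_{ij}^{(1)(2)}(\omega)$ becomes $\boldsymbol{c}^{(1)\top} G(\omega)\boldsymbol{u}$, and the denominator is the same expression with $\boldsymbol{c}^{(2)}$. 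The coefficients $c_l^{(m)}$ do not depend on $r_{ij}$, because the observables exclude the $j\to i$ counts. It therefore suffices to show that $\boldsymbol{c}^\top G(\omega)\boldsymbol{u}$ depends on $r_{ij}$ only through a scalar factor that is common to every $\boldsymbol{c}$.

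The key structural fact is that $r_{ij}$ enters $R$ through a rank-one term:
\begin{equation}
R = R_0 + r_{ij}\,\boldsymbol{u}\boldsymbol{e}_j^\top,
\end{equation}
which adds $r_{ij}$ to entry $(i,j)$ and subtracts it from $(j,j)$. Here $R_0$ is independent of $r_{ij}$. Writing $G_0 = (\omega I - R_0)^{-1}$, the Sherman--Morrison formula gives
\begin{equation}
G\boldsymbol{u} = \frac{G_0\boldsymbol{u}}{1 - r_{ij}\,\boldsymbol{e}_j^\top G_0\boldsymbol{u}}.
\end{equation}
Consequently $\boldsymbol{c}^{(m)\top}G\boldsymbol{u} = \boldsymbol{c}^{(m)\top}G_0\boldsymbol{u}/(1-r_{ij}G_{0,ju})$. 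In the ratio the common denominator cancels, leaving
\begin{equation}
\hat\chi_{ij}^{(1)(2)}(\omega) = \frac{\boldsymbol{c}^{(1)\top}G_0\boldsymbol{u}}{\boldsymbol{c}^{(2)\top}G_0\boldsymbol{u}},
\end{equation}
which is manifestly free of $r_{ij}$. An equivalent differential route is to use $\rmd G/\rmd r_{ij} = G(\boldsymbol{u}\boldsymbol{e}_j^\top)G$. This gives $\rmd(G\boldsymbol{u})/\rmd r_{ij} = (G\boldsymbol{u})(\boldsymbol{e}_j^\top G\boldsymbol{u})$, so each numerator and denominator obeys $\rmd N_m/\rmd r_{ij} = \kappa N_m$ with the same $\kappa$. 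The logarithmic derivative of the ratio is then zero.

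The main obstacle is invertibility. We must confirm that $\omega I - R_0$ is invertible and that $1 - r_{ij}\boldsymbol{e}_j^\top G_0\boldsymbol{u}\neq 0$ for $\operatorname{Re}\omega>0$. This is where the hypothesis on $\omega$ is essential; it is also why the steady-state version needed the fundamental matrix $Z$ rather than a plain inverse.

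For the first condition, $R_0$ is itself a valid rate matrix, namely the one with $r_{ij}$ set to zero. Its eigenvalues lie in the closed left half-plane by Gershgorin, since its columns sum to zero with nonnegative off-diagonal entries. Hence $G_0$ exists.

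For the second condition, the matrix determinant lemma gives $\det(\omega I - R) = \det(\omega I - R_0)\,(1 - r_{ij}\boldsymbol{e}_j^\top G_0\boldsymbol{u})$. The left-hand side is nonzero because $R$ is also a rate matrix. So the scalar factor cannot vanish.

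The differential route avoids $R_0$ entirely and needs only invertibility of $\omega I - R$ along the range of $r_{ij}$. I would use it as the main argument and record Sherman--Morrison as the explicit form. Finally, I would note that the conclusion holds wherever the denominator $\boldsymbol{c}^{(2)\top}G\boldsymbol{u}$ is nonzero.
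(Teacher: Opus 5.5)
Your proposal is correct. In the differential form that you designate as your main argument, it is the paper's proof: the paper differentiates $(\omega I - R)\hat P = I$ to get $\rmd \hat P/\rmd r_{ij} = \hat P(E_{ij}-E_{jj})\hat P$, then obtains $\rmd\hat{\boldsymbol v}/\rmd r_{ij} = \hat{\boldsymbol v}_j\hat{\boldsymbol v}$ for $\hat{\boldsymbol v} = \hat P(\boldsymbol e_i - \boldsymbol e_j)$, and the quotient rule gives zero.

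Your Sherman--Morrison formula is the integrated version of that same identity, and it buys a little more. It gives the explicitly $r_{ij}$-free expression $\boldsymbol c^{(1)\top}G_0\boldsymbol u/\boldsymbol c^{(2)\top}G_0\boldsymbol u$, valid for all $r_{ij}$ and not just on intervals where the denominator is nonzero. Since $1 - r_{ij}\boldsymbol e_j^\top G_0\boldsymbol u \neq 0$, it also shows that whether the denominator vanishes does not depend on $r_{ij}$.
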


\cref{theorem: dchi_ij^(1)(2)(omega)/dr=0} shows that the ratio $\hat{\chi}_{ij}^{(1)(2)}(\omega)$ is independent of the transition rate $r_{ij}$, which means that the linear response of $\hat{Q}_1(\omega)$ and $\hat{Q}_2(\omega)$ to the perturbation on $r_{ij}$ are proportional to each other with a proportionality constant $\hat{\chi}_{ij}^{(1)(2)}(\omega)$ that does not depend on $r_{ij}$. The mutual linearity can also be formulated as the linear dependence between $\hat{Q}_1(\omega)$ and $\hat{Q}_2(\omega)$ in the frequency domain. For $\operatorname{Re}\omega > 0$ and $\rmd_{r_{ij}} \hat{Q}_2(\omega) \neq 0$, there exist constants $\hat{\chi}_{ij}^{(1)(2)}(\omega)$ and $\hat{\gamma}_{ij}^{(1)(2)}(\omega)$ such that
\begin{equation}
  \left.\hat{Q}_1(\omega) \right|_{r_{ij}} = \hat{\chi}_{ij}^{(1)(2)}(\omega) \left.\hat{Q}_2(\omega) \right|_{r_{ij}} + \hat{\gamma}_{ij}^{(1)(2)}(\omega),
  \label{eq: mutual linearity non-stationary}
\end{equation}
where $\hat{\gamma}_{ij}^{(1)(2)}(\omega)$ is the intercept. The mutual linearity implies that the spectrum of two observables $Q_1$ and $Q_2$ contains the same amount of information about the system's response to the perturbation on $r_{ij}$. This result recovers the mutual linearity for current observables in the non-stationary regime in \cite{harunari2024mutual} and extends the linear relation from current to state-counting observables that do not explicitly include the perturbed transition counts to the non-stationary regime from the stochastic trajectory point of view. The stationary mutual linearity can be recovered by taking the limit $\omega \to 0$ in \cref{eq: mutual linearity non-stationary}. The frequency-domain result shows that mutual linearity is not restricted to steady states, but is a frequency-resolved property of transient dynamics.

\section{Numerical Illustrations} \label{sec: numerical}

In this section, we provide numerical verification of the frequency-domain mutual linearity \cref{eq: mutual linearity non-stationary} derived in \cref{sec: mutual linearity frequency}. The validation is carried out using stochastic trajectory simulations based on the Gillespie algorithm \cite{gillespie1977exact}, which allows for a direct evaluation of Laplace-domain observables without relying on matrix-based analytical solutions. We consider the simple exclusion processes (SEP) and demonstrate that the predicted linear relation between Laplace-transformed observables holds robustly across a wide range of dynamical regimes.

We consider an interacting lattice transport model in the form of an open one-dimensional simple exclusion process with $N$ sites, coupled to particle reservoirs and a thermal bath at temperature $T$. Each site $i$ can be either empty or occupied by at most one particle, so that the configuration space consists of $2^N$ states labeled by
\begin{equation}
  x = (\alpha_1,\alpha_2,\dots,\alpha_N), \qquad \alpha_i \in \{0,1\}.
\end{equation}

Each site is assigned an energy $E_i$, and particle transport between neighboring sites involves overcoming the transition state energy $B_{i,i+1} = B_{i+1,i}$. The system is coupled to left and right particle reservoirs with chemical potentials $\mu_L$ and $\mu_R$, respectively. The entire system is in contact with a thermal bath at inverse temperature $\beta = 1/(k_B T)$.

The dynamics are described as a Markov jump process with transition rates satisfying an Arrhenius form. For a transition $x \to x'$ involving a particle hop or exchange with a reservoir, the transition rate is taken as
\begin{equation}
  k_{x',x} = \exp\left[-\beta\left( B_{x', x} - E_x \right)\right],
\end{equation}
where $E_x$ is the energy of configuration $x$, and $B_{x', x}$ is the corresponding transition-state energy. For bulk hopping between neighboring sites $i$ and $i+1$, the transition corresponds to moving a particle across a bond. The energy change is determined by the site energies, leading to rates of the form
\begin{equation}
  k_{i+1, i} = \exp\left[-\beta\left(B_{i+1,i} - E_i  \right)\right],
\end{equation}
and similarly for the reverse process. At the boundaries, the system exchanges particles with reservoirs characterized by chemical potentials $\mu_L$ and $\mu_R$. At the left boundary, particle injection and extraction are described by
\begin{align}
  k^{\mathrm{in}}_L &= \exp \left[ -\beta \left( B_L - \mu_L \right) \right], \\
  k^{\mathrm{out}}_L &= \exp \left[ -\beta \left(B_L - E_1 \right) \right],
\end{align}
where $B_L$ is the barrier associated with the left boundary. Similarly, at the right boundary, we have
\begin{align}
  k^{\mathrm{in}}_R &= \exp \left[ -\beta \left( B_R - \mu_R \right) \right], \\
  k^{\mathrm{out}}_R &= \exp \left[ -\beta \left( B_R - E_N \right) \right],
\end{align}
with $B_R$ the barrier at the right boundary.

We introduce a perturbation by modifying a single transition rate, namely the injection of a particle from the left reservoir into the completely empty configuration,
\begin{equation}
00\cdots0 \;\to\; 10\cdots0,
\end{equation}
and denote this rate by $\lambda$. Physically, this corresponds to tuning the injection rate at the left boundary in the low-density regime, while all other rates remain determined by the underlying energy landscape.

We consider four time-integrated observables. The first is the net particle current flowing into the right reservoir,
\begin{equation}
    Q_1(\tau) = \int_0^\tau J_R(t)\,\rmd t,
\end{equation}
where $J_R(t)$ counts particle transfers across the right boundary. The second and third observables are the dwelling times in the fully occupied configuration and the empty configuration, respectively,
\begin{equation}
    Q_2(\tau) = \int_0^\tau \rmd \tau_{11\cdots1}, \quad Q_3(\tau) = \int_0^\tau \rmd \tau_{00\cdots0}.
\end{equation}
To illustrate that thermodynamic observables are not excluded from the mutual-linearity structure, we also consider a local entropy-production observable associated with the first bulk bond. Let $x^{1\to2}$ denote the configuration obtained from $x$ by moving one particle from site $1$ to site $2$, and let $x^{2\to1}$ denote the reverse move. We define
\begin{align}
    Q_4(\tau) &= \sum_{x:\alpha_1=1,\alpha_2=0} \ln \frac{k_{x^{1\to2},x}}{k_{x,x^{1\to2}}} n_{x^{1\to2},x}(\tau) \nonumber\\
    &\quad + \sum_{x:\alpha_1=0,\alpha_2=1} \ln \frac{k_{x^{2\to1},x}}{k_{x,x^{2\to1}}} n_{x^{2\to1},x}(\tau).
\end{align}
This observable is the medium entropy production accumulated along the microscopic transitions crossing the first bulk bond. Importantly, these transitions do not include the perturbed channel $00\cdots0 \to 10\cdots0$. Therefore, $Q_4$ falls within the class of additive observables covered by the mutual-linearity theorem. By contrast, a total entropy-production observable involving all transition channels, or an observable whose jump weights themselves depend on the perturbed parameter, is not expected to satisfy the single-edge mutual-linearity relation in general.

We define the Laplace-transformed observables
\begin{equation}
    \hat{Q}_i(\omega) = \int_0^\infty e^{-\omega\tau} \langle Q_i(\tau) \rangle \rmd \tau, \qquad i = 1,2,3,4.
\end{equation}

We numerically simulate a $(N = 3)$-site model and a $(N = 8)$-site model, which correspond to $8$-state and $256$-state Markov networks, respectively. The results are shown in \cref{fig: SEP,fig: SEP entropy}. For each fixed $\omega$, the parametric plots collapse onto straight lines as $\lambda$ is varied, in agreement with the theoretical prediction. In particular, \cref{fig: SEP entropy} shows that the local entropy-production observable $\hat{Q}_4(\omega)$ is mutually linear with both $\hat{Q}_2(\omega)$ and $\hat{Q}_3(\omega)$. This demonstrates that physically important thermodynamic observables can satisfy mutual linearity, provided that the observable does not contain the perturbed transition channel.

\begin{figure}[htbp]
  \centering
  \includegraphics[width=1.0\linewidth]{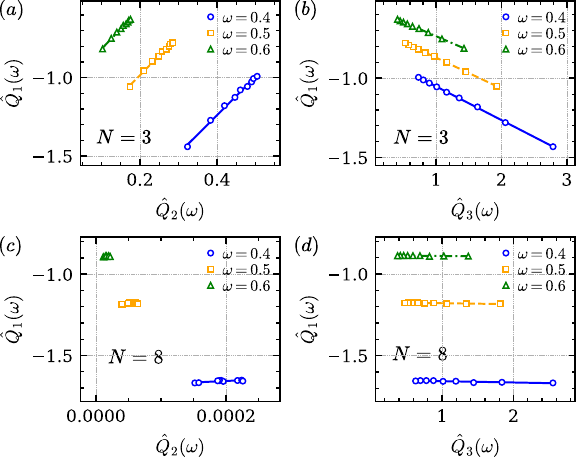}
  \caption{
  \textbf{Parametric plot of $\hat{Q}_1(\omega)$ versus $\hat{Q}_2(\omega)$ and $\hat{Q}_3(\omega)$ for the SEP model.}
  All simulations are initialized from the same delta distribution over the empty configuration. The inverse temperature is set to $\beta = 0.1$, the chemical potentials are $\mu_L = 2.0$ and $\mu_R = 0.0$, and the energy landscape is given by $E_i = i/N$ and $B_{i,i+1} = 1.5$ for all $i$. The perturbed rate $\lambda$ is varied from $0.1$ to $5.0$. Each data point is averaged over $50000$ trajectories with time length $\tau = 500.0$.}
  \label{fig: SEP}
\end{figure}

\begin{figure}[htbp]
    \centering
    \includegraphics[width=1.0\linewidth]{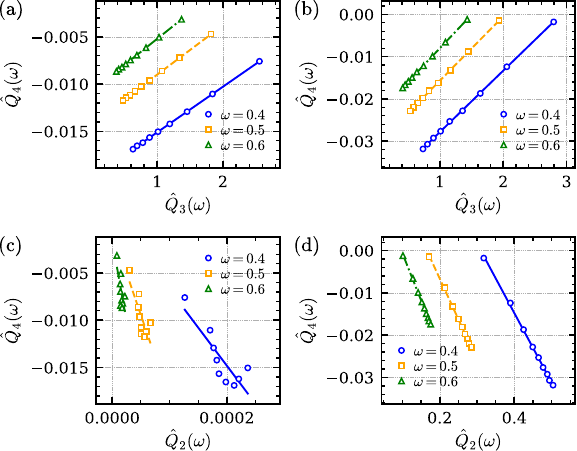}
    \caption{
    \textbf{Parametric plot of the local entropy production $\hat{Q}_4(\omega)$ versus $\hat{Q}_2(\omega)$ and $\hat{Q}_3(\omega)$ for the SEP model.}
    The observable $Q_4$ is the medium entropy production accumulated on the first bulk bond $1\leftrightarrow2$, which does not include the perturbed transition $00\cdots0 \to 10\cdots0$. All simulations are initialized from the same delta distribution over the empty configuration. The inverse temperature is set to $\beta = 0.1$, the chemical potentials are $\mu_L = 2.0$ and $\mu_R = 0.0$, and the energy landscape is given by $E_i = i/N$ and $B_{i,i+1} = 1.5$ for all $i$. The perturbed rate $\lambda$ is varied from $0.1$ to $5.0$. Each data point is averaged over $50000$ trajectories with time length $\tau = 500.0$.}
    \label{fig: SEP entropy}
\end{figure}

\section{Conclusion and Discussion}

In this work, we have developed a trajectory-level framework to understand mutual linearity in Markov jump processes. By expressing the linear response in terms of correlations with martingale noise via the Doob-Meyer decomposition, we showed that mutual linearity arises from a simple multiplicative structure of the response kernel associated with the transition probabilities. This provides a transparent trajectory-level interpretation of a result that was previously derived using linear algebraic methods \cite{harunari2024mutual,bebon2026mutual}.

Our approach reveals that mutual linearity is not merely a consequence of specific algebraic properties of the generator, but rather reflects a general dynamical structure of autonomous Markov jump processes. In particular, a local perturbation of a single transition channel propagates through the system along the same set of transition probabilities, leading to proportional responses of different observables. This mechanism explains why distinct observables share the same response structure.

The result established in this work applies to finite-state Markov jump processes under single-edge perturbations. In general, perturbations that affect multiple transition channels do not necessarily preserve mutual linearity, because the corresponding response is a sum of several edge-resolved response kernels with different dynamical structures. An interesting and physically important exception may arise in chemical reaction networks. Although such systems have countably infinite state spaces, their transition structure is highly regular: perturbing the rate constant of a chemical reaction changes all microscopic transition rates associated with the same stoichiometric jump direction. This lattice-like structure may impose additional constraints that restore a form of mutual linearity beyond the finite-state single-edge setting considered here. This expectation is also supported by the recent proof of mutual linearity for Langevin dynamics \cite{zheng2026mutual}, since stochastic chemical reaction networks are often be approximated by Langevin equations in suitable large-volume limits. We therefore conjecture that mutual linearity may also hold for chemical reaction networks described exactly at the master-equation level. A rigorous proof of this conjecture is left for future work.

Furthermore, the trajectory-based formulation naturally extends mutual linearity beyond steady states. By analyzing the Laplace-transformed response, we demonstrated that the linear dependence between observables persists in the frequency domain for non-stationary dynamics. This shows that mutual linearity is a frequency-resolved property of the response, characterizing both transient and steady-state behavior within a unified framework.

Finally, since similar trajectory and martingale techniques are well established for diffusion processes \cite{dieball2023direct} and open quantum systems \cite{kwon2025unified}, our results suggest a promising route to generalizing mutual linearity to continuous systems and quantum systems. Exploring such extensions, as well as their implications for fluctuation-response relations and uncertainty bounds, would be a valuable direction for future work.

\section{Acknowledgements}

This work is supported by the U.S. National Science Foundation under Grant No. DMR-2145256 and Alfred P. Sloan Foundation Matter-to-Life Theory Award under Grant No. G-2025-25194.

\section{Data availability}

The data that support the findings of this article are generated by numerical simulation codes that are openly available at \cite{data}.

\appendix
\section{Proof of \texorpdfstring{\Cref{theorem: dchi_ij^kl/dr=0}}{Theorem 1}} \label[appendix]{AppSec:proof_of_theorem1}

\noindent \textbf{\cref{theorem: dchi_ij^kl/dr=0}.} For observables that excludes the transition from $j$ to $i$, the ratio $\chi_{ij}^{kl}$ is independent of the transition rate $r_{ij}$, i.e., $\rmd \chi_{ij}^{kl} / \rmd r_{ij} = 0$.

\begin{proof}
  The fundamental matrix $Z$ is the pseudoinverse of the transition rate matrix $R$, which satisfies the following properties:
  \begin{align}
    RZ &= ZR = \boldsymbol{\pi}\boldsymbol{1}^\top - I,
    \label{eq: pseudoinverse}
  \end{align}
  where $I$ is the identity matrix and $\boldsymbol{0}$ is the zero vector. This can be obtained by using $Re^{Rt} = e^{Rt}R = \frac{\rmd}{\rmd t} e^{Rt}$, $R\boldsymbol{\pi} = 0$, $\boldsymbol{1}^\top R = 0$, and $\lim_{t \to \infty} e^{Rt} = \boldsymbol{\pi}\boldsymbol{1}^\top$:
  \begin{subequations}
  \begin{align}
    RZ &= \int_0^\infty R e^{Rt} \rmd t - \int_0^\infty R \boldsymbol{\pi}\boldsymbol{1}^\top \rmd t \\
    &= \int_0^\infty \frac{\rmd}{\rmd t} e^{Rt} \rmd t \\
    &= \boldsymbol{\pi}\boldsymbol{1}^\top - I, \\
    ZR &= \int_0^\infty e^{Rt} R \rmd t - \int_0^\infty \boldsymbol{\pi}\boldsymbol{1}^\top R \rmd t \\
    &= \int_0^\infty \frac{\rmd}{\rmd t} e^{Rt} \rmd t \\
    &= \boldsymbol{\pi}\boldsymbol{1}^\top - I.
  \end{align}
  \end{subequations}
  The fundamental matrix $Z$ also satisfies the orthogonality condition $Z\boldsymbol{\pi} = \boldsymbol{0}$ and $\boldsymbol{1}^\top Z = \boldsymbol{0}^\top$. This can be obtained by using $e^{Rt}\boldsymbol{\pi} = \boldsymbol{\pi}$ and $\boldsymbol{1}^\top e^{Rt} = \boldsymbol{1}^\top$.

  We first compute the derivative of $Z$ with respect to $r_{ij}$. Taking the derivative of the pseudoinverse property \cref{eq: pseudoinverse} with respect to $r_{ij}$ leads to
  \begin{equation}
    \frac{\rmd R}{\rmd r_{ij}} Z + R \frac{\rmd Z}{\rmd r_{ij}} = \frac{\rmd \boldsymbol{\pi}}{\rmd r_{ij}} \boldsymbol{1}^\top.
  \end{equation}
  Multiplying both sides of the above equation by $Z$ from the left and using the pseudoinverse property again, we have
  \begin{equation}
    Z \frac{\rmd R}{\rmd r_{ij}} Z + (\boldsymbol{\pi}\boldsymbol{1}^\top - I) \frac{\rmd Z}{\rmd r_{ij}} = Z \frac{\rmd \boldsymbol{\pi}}{\rmd r_{ij}} \boldsymbol{1}^\top.
  \end{equation}
  Rearranging the above equation gives
  \begin{subequations}
  \begin{align}
    - (\boldsymbol{\pi}\boldsymbol{1}^\top - I) \frac{\rmd Z}{\rmd r_{ij}} &= - \boldsymbol{\pi}\frac{\rmd (\boldsymbol{1}^\top Z)}{\rmd r_{ij}} + \frac{\rmd Z}{\rmd r_{ij}} \\
    &= \frac{\rmd Z}{\rmd r_{ij}} \\
    &= Z \frac{\rmd R}{\rmd r_{ij}} Z - Z \frac{\rmd \boldsymbol{\pi}}{\rmd r_{ij}} \boldsymbol{1}^\top,
  \end{align}
  \end{subequations}
  where we use $\boldsymbol{1}^\top Z = \boldsymbol{0}^\top$ to obtain the second line.

  Now we calculate the derivative of $Z_{ki}-Z_{kj}$ and $Z_{li}-Z_{lj}$ with respect to $r_{ij}$. Let $E_{ij}$ be the matrix with all elements equal to zero except for the element at the $i$-th row and $j$-th column, which is equal to one. We have $\frac{\rmd R}{\rmd r_{ij}} = E_{ij} - E_{jj}$. Let $\boldsymbol{e}_k$ be the column vector with all elements equal to zero except for the $k$-th element, which is equal to one. Let $\boldsymbol{v} = Z(\boldsymbol{e}_i - \boldsymbol{e}_j)$, we have $\boldsymbol{v}_k = Z_{ki}-Z_{kj}$. The derivative of $\boldsymbol{v}$ with respect to $r_{ij}$ is given by
  \begin{subequations}
  \begin{align}
    \frac{\rmd \boldsymbol{v}}{\rmd r_{ij}} ={}& \frac{\rmd Z}{\rmd r_{ij}} (\boldsymbol{e}_i - \boldsymbol{e}_j) \\
    ={}& Z (E_{ij} - E_{jj}) Z (\boldsymbol{e}_i - \boldsymbol{e}_j) \nonumber \\
    &- Z \frac{\rmd \boldsymbol{\pi}}{\rmd r_{ij}} \boldsymbol{1}^\top (\boldsymbol{e}_i - \boldsymbol{e}_j) \\
    ={}& Z (E_{ij} - E_{jj}) Z (\boldsymbol{e}_i - \boldsymbol{e}_j) \label{eq: normalization} \\
    ={}& Z (E_{ij} - E_{jj}) \boldsymbol{v} \\
    ={}& Z \boldsymbol{v}_j (\boldsymbol{e}_i - \boldsymbol{e}_j) \label{eq: v_j} \\
    ={}& \boldsymbol{v}_j \boldsymbol{v},
  \end{align}
  \end{subequations}
  where we use $\boldsymbol{1}^\top (\boldsymbol{e}_i - \boldsymbol{e}_j) = 0$ to obtain \cref{eq: normalization}. \cref{eq: v_j} is obtained by noting that $(E_{ij} \boldsymbol{v})_k = \boldsymbol{v}_j \delta_{ik}$ and $(E_{jj} \boldsymbol{v})_k = \boldsymbol{v}_j \delta_{jk}$, which leads to $Z (E_{ij} - E_{jj}) \boldsymbol{v} = Z \boldsymbol{v}_j (\boldsymbol{e}_i - \boldsymbol{e}_j)$. The last line is obtained by noting that $\boldsymbol{v}_j$ is a scalar.

  The above equation implies that
  \begin{subequations}
  \begin{align}
    \frac{\rmd (Z_{ki}-Z_{kj})}{\rmd r_{ij}} &= \frac{\rmd \boldsymbol{v}_k}{\rmd r_{ij}} = \boldsymbol{v}_j \boldsymbol{v}_k, \\
    \frac{\rmd (Z_{li}-Z_{lj})}{\rmd r_{ij}} &= \frac{\rmd \boldsymbol{v}_l}{\rmd r_{ij}} = \boldsymbol{v}_j \boldsymbol{v}_l.
  \end{align}
  \end{subequations}
  Therefore, we have
  \begin{subequations}
  \begin{align}
    \frac{\rmd \chi_{ij}^{kl}}{\rmd r_{ij}} &= \frac{\rmd}{\rmd r_{ij}} \frac{\boldsymbol{v}_k}{\boldsymbol{v}_l} \\
    &= \frac{\frac{\rmd \boldsymbol{v}_k}{\rmd r_{ij}} \boldsymbol{v}_l - \boldsymbol{v}_k \frac{\rmd \boldsymbol{v}_l}{\rmd r_{ij}}}{\boldsymbol{v}_l^2} \\
    &= \frac{(\boldsymbol{v}_j\boldsymbol{v}_k)\boldsymbol{v}_l - \boldsymbol{v}_k (\boldsymbol{v}_j\boldsymbol{v}_l)}{\boldsymbol{v}_l^2} \\
    &= 0.
  \end{align}
  \end{subequations}
\end{proof}

\section{Proof of \texorpdfstring{\Cref{theorem: dchi_ij^(1)(2)/dr=0}}{Theorem 2}} \label[appendix]{AppSec:proof_of_theorem2}

\noindent \textbf{\cref{theorem: dchi_ij^(1)(2)/dr=0}.} The ratio $\chi_{ij}^{(1)(2)}$ is independent of the transition rate $r_{ij}$, i.e., $\rmd \chi_{ij}^{(1)(2)} / \rmd r_{ij} = 0$.
  
\begin{proof}
  The derivative of $\chi_{ij}^{(1)(2)}$ with respect to $r_{ij}$ is given by
  \begin{subequations}
  \begin{align}
    \frac{\rmd \chi_{ij}^{(1)(2)}}{\rmd r_{ij}} &= \frac{\rmd}{\rmd r_{ij}} \frac{\sum_l c_l^{(1)}\boldsymbol{v}_l}{\sum_l c_l^{(2)}\boldsymbol{v}_l} \\
    ={}& \frac{\sum_l c_l^{(1)} \frac{\rmd \boldsymbol{v}_l}{\rmd r_{ij}} \sum_l c_l^{(2)} \boldsymbol{v}_l - \sum_l c_l^{(1)} \boldsymbol{v}_l \sum_l c_l^{(2)} \frac{\rmd \boldsymbol{v}_l}{\rmd r_{ij}}}{\left[ \sum_l c_l^{(2)} \boldsymbol{v}_l \right]^2} \\
    ={}& \frac{\sum_l c_l^{(1)} \boldsymbol{v}_j \boldsymbol{v}_l \sum_l c_l^{(2)} \boldsymbol{v}_l - \sum_l c_l^{(1)} \boldsymbol{v}_l \sum_l c_l^{(2)} \boldsymbol{v}_j \boldsymbol{v}_l}{\left[ \sum_l c_l^{(2)} \boldsymbol{v}_l \right]^2} \\
    ={}& 0,
  \end{align}
  \end{subequations}
  where we have used the result $\frac{\rmd \boldsymbol{v}_l}{\rmd r_{ij}} = \boldsymbol{v}_j \boldsymbol{v}_l$ with $\boldsymbol{v}_l \equiv (Z_{li} - Z_{lj})$ obtained in the proof of \cref{theorem: dchi_ij^kl/dr=0}.
\end{proof}

\section{Proof of \texorpdfstring{\Cref{theorem: dchi_ij^(1)(2)(omega)/dr=0}}{Theorem 3}} \label[appendix]{AppSec:proof_of_theorem3}

\noindent \textbf{\cref{theorem: dchi_ij^(1)(2)(omega)/dr=0}.} For $\operatorname{Re}\omega > 0$, the ratio $\hat{\chi}_{ij}^{(1)(2)}(\omega)$ is independent of the transition rate $r_{ij}$, i.e., $\rmd \hat{\chi}_{ij}^{(1)(2)}(\omega) / \rmd r_{ij} = 0$, for observables that exclude transitions from $j$ to $i$.
  
\begin{proof}
  Let $\hat{P}(\omega) = \int_0^\infty e^{-\omega t} e^{Rt} \rmd t$ be the Laplace transform of the transition probability matrix $P(t) = e^{Rt}$. The ratio can be rewritten as
  \begin{equation}
  \hat{\chi}_{ij}^{(1)(2)}(\omega) = \frac{\sum_l c_l^{(1)} \left[ \hat{P}_{li}(\omega) - \hat{P}_{lj}(\omega) \right]}{\sum_l c_l^{(2)} \left[ \hat{P}_{li}(\omega) - \hat{P}_{lj}(\omega) \right]}.
  \end{equation}
  The matrix $\hat{P}(\omega)$ satisfies the following equation:
  \begin{subequations}
  \begin{align}
    R \hat{P}(\omega) &= \int_0^\infty e^{-\omega t} Re^{Rt} \rmd t \\
    &= \int_0^\infty e^{-\omega t} \left( \frac{\rmd}{\rmd t} e^{Rt} \right) \rmd t \\
    &= \left[ e^{-\omega t} e^{Rt} \right]_0^\infty + \omega \int_0^\infty e^{-\omega t} e^{Rt} \rmd t \\
    &= - I + \omega \hat{P}(\omega),
  \end{align}
  \end{subequations}
  where we use the fact that $\frac{\rmd}{\rmd t} e^{Rt} = R e^{Rt}$ and $P(0) = I$ to obtain the third line. The rearrangement gives $\hat{P}(\omega) = (\omega I - R)^{-1}$.

  Taking the derivative of $(\omega I - R) \hat{P}(\omega) = I$ with respect to $r_{ij}$, we have
  \begin{equation}
    -\frac{\rmd R}{\rmd r_{ij}} \hat{P} + (\omega I - R) \frac{\rmd \hat{P}}{\rmd r_{ij}} = 0.
  \end{equation}
  Rearranging the above equation gives
  \begin{subequations}
  \begin{align}
    \frac{\rmd \hat{P}}{\rmd r_{ij}} &= (\omega I - R)^{-1} \frac{\rmd R}{\rmd r_{ij}} \hat{P} \\
    &= \hat{P} \frac{\rmd R}{\rmd r_{ij}} \hat{P} \\
    &= \hat{P} (E_{ij} - E_{jj}) \hat{P}
  \end{align}
  \end{subequations}
  where we use $\frac{\rmd R}{\rmd r_{ij}} = E_{ij} - E_{jj}$.

  Then, let $\hat{\boldsymbol{v}} = \hat{P} (\boldsymbol{e}_i - \boldsymbol{e}_j)$, we have $\hat{\boldsymbol{v}}_l = \hat{P}_{li} - \hat{P}_{lj}$. The derivative of $\hat{\boldsymbol{v}}$ with respect to $r_{ij}$ is given by
  \begin{equation}
    \frac{\rmd \hat{\boldsymbol{v}}}{\rmd r_{ij}} = \hat{P} (E_{ij} - E_{jj}) \hat{P} (\boldsymbol{e}_i - \boldsymbol{e}_j) = \hat{\boldsymbol{v}}_j \hat{\boldsymbol{v}},
  \end{equation}
  where we have used the result $(E_{ij} - E_{jj}) \hat{P} = \boldsymbol{v}_j (\boldsymbol{e}_i - \boldsymbol{e}_j)$ obtained by noting that $(E_{ij} \hat{\boldsymbol{v}})_l = \hat{\boldsymbol{v}}_j \delta_{il}$ and $(E_{jj} \hat{\boldsymbol{v}})_l = \hat{\boldsymbol{v}}_j \delta_{jl}$. Therefore, the derivative of the ratio $\hat{\chi}_{ij}^{(1)(2)}$ with respect to $r_{ij}$ is zero:
  \begin{subequations}
  \begin{align}
    \frac{\rmd \hat{\chi}_{ij}^{(1)(2)}}{\rmd r_{ij}} &= \frac{\rmd}{\rmd r_{ij}} \frac{\sum_l c_l^{(1)} \left( \hat{P}_{li} - \hat{P}_{lj} \right)}{\sum_l c_l^{(2)} \left( \hat{P}_{li} - \hat{P}_{lj} \right)} \\
    ={}& \frac{\sum_l c_l^{(1)} \hat{\boldsymbol{v}}_j \hat{\boldsymbol{v}}_l \sum_l c_l^{(2)} \hat{\boldsymbol{v}}_l - \sum_l c_l^{(1)} \hat{\boldsymbol{v}}_l \sum_l c_l^{(2)} \hat{\boldsymbol{v}}_j \hat{\boldsymbol{v}}_l}{\left[ \sum_l c_l^{(2)} \hat{\boldsymbol{v}}_l \right]^2} \\
    ={}& 0.
  \end{align}
  \end{subequations}
  It is worth noting that $\hat{P}(\omega)$ diverges as $\omega \to 0$ due to the presence of the zero eigenvalue of $R$. However, the difference $\hat{P}_{li}(\omega) - \hat{P}_{lj}(\omega)$ remains finite as $\omega \to 0$ and converges to $Z_{li} - Z_{lj}$, which is consistent with the result in the stationary regime.
\end{proof}

\bibliography{manuscript}

\begin{thebibliography}{49}%
\makeatletter
\providecommand \@ifxundefined [1]{%
 \@ifx{#1\undefined}
}%
\providecommand \@ifnum [1]{%
 \ifnum #1\expandafter \@firstoftwo
 \else \expandafter \@secondoftwo
 \fi
}%
\providecommand \@ifx [1]{%
 \ifx #1\expandafter \@firstoftwo
 \else \expandafter \@secondoftwo
 \fi
}%
\providecommand \natexlab [1]{#1}%
\providecommand \enquote  [1]{``#1''}%
\providecommand \bibnamefont  [1]{#1}%
\providecommand \bibfnamefont [1]{#1}%
\providecommand \citenamefont [1]{#1}%
\providecommand \href@noop [0]{\@secondoftwo}%
\providecommand \href [0]{\begingroup \@sanitize@url \@href}%
\providecommand \@href[1]{\@@startlink{#1}\@@href}%
\providecommand \@@href[1]{\endgroup#1\@@endlink}%
\providecommand \@sanitize@url [0]{\catcode `\\12\catcode `\$12\catcode `\&12\catcode `\#12\catcode `\^12\catcode `\_12\catcode `\%12\relax}%
\providecommand \@@startlink[1]{}%
\providecommand \@@endlink[0]{}%
\providecommand \url  [0]{\begingroup\@sanitize@url \@url }%
\providecommand \@url [1]{\endgroup\@href {#1}{\urlprefix }}%
\providecommand \urlprefix  [0]{URL }%
\providecommand \Eprint [0]{\href }%
\providecommand \doibase [0]{https://doi.org/}%
\providecommand \selectlanguage [0]{\@gobble}%
\providecommand \bibinfo  [0]{\@secondoftwo}%
\providecommand \bibfield  [0]{\@secondoftwo}%
\providecommand \translation [1]{[#1]}%
\providecommand \BibitemOpen [0]{}%
\providecommand \bibitemStop [0]{}%
\providecommand \bibitemNoStop [0]{.\EOS\space}%
\providecommand \EOS [0]{\spacefactor3000\relax}%
\providecommand \BibitemShut  [1]{\csname bibitem#1\endcsname}%
\let\auto@bib@innerbib\@empty
\bibitem [{\citenamefont {Mora}(2015)}]{mora2015physical}%
  \BibitemOpen
  \bibfield  {author} {\bibinfo {author} {\bibfnamefont {T.}~\bibnamefont {Mora}},\ }\bibfield  {title} {\bibinfo {title} {Physical limit to concentration sensing amid spurious ligands},\ }\href@noop {} {\bibfield  {journal} {\bibinfo  {journal} {Physical review letters}\ }\textbf {\bibinfo {volume} {115}},\ \bibinfo {pages} {038102} (\bibinfo {year} {2015})}\BibitemShut {NoStop}%
\bibitem [{\citenamefont {Bialek}\ and\ \citenamefont {Setayeshgar}(2005)}]{bialek2005physical}%
  \BibitemOpen
  \bibfield  {author} {\bibinfo {author} {\bibfnamefont {W.}~\bibnamefont {Bialek}}\ and\ \bibinfo {author} {\bibfnamefont {S.}~\bibnamefont {Setayeshgar}},\ }\bibfield  {title} {\bibinfo {title} {Physical limits to biochemical signaling},\ }\href@noop {} {\bibfield  {journal} {\bibinfo  {journal} {Proceedings of the National Academy of Sciences}\ }\textbf {\bibinfo {volume} {102}},\ \bibinfo {pages} {10040} (\bibinfo {year} {2005})}\BibitemShut {NoStop}%
\bibitem [{\citenamefont {Hartich}\ \emph {et~al.}(2016)\citenamefont {Hartich}, \citenamefont {Barato},\ and\ \citenamefont {Seifert}}]{hartich2016sensory}%
  \BibitemOpen
  \bibfield  {author} {\bibinfo {author} {\bibfnamefont {D.}~\bibnamefont {Hartich}}, \bibinfo {author} {\bibfnamefont {A.~C.}\ \bibnamefont {Barato}},\ and\ \bibinfo {author} {\bibfnamefont {U.}~\bibnamefont {Seifert}},\ }\bibfield  {title} {\bibinfo {title} {Sensory capacity: An information theoretical measure of the performance of a sensor},\ }\href@noop {} {\bibfield  {journal} {\bibinfo  {journal} {Physical Review E}\ }\textbf {\bibinfo {volume} {93}},\ \bibinfo {pages} {022116} (\bibinfo {year} {2016})}\BibitemShut {NoStop}%
\bibitem [{\citenamefont {Wark}\ \emph {et~al.}(2007)\citenamefont {Wark}, \citenamefont {Lundstrom},\ and\ \citenamefont {Fairhall}}]{wark2007sensory}%
  \BibitemOpen
  \bibfield  {author} {\bibinfo {author} {\bibfnamefont {B.}~\bibnamefont {Wark}}, \bibinfo {author} {\bibfnamefont {B.~N.}\ \bibnamefont {Lundstrom}},\ and\ \bibinfo {author} {\bibfnamefont {A.}~\bibnamefont {Fairhall}},\ }\bibfield  {title} {\bibinfo {title} {Sensory adaptation},\ }\href@noop {} {\bibfield  {journal} {\bibinfo  {journal} {Current opinion in neurobiology}\ }\textbf {\bibinfo {volume} {17}},\ \bibinfo {pages} {423} (\bibinfo {year} {2007})}\BibitemShut {NoStop}%
\bibitem [{\citenamefont {Lan}\ \emph {et~al.}(2012)\citenamefont {Lan}, \citenamefont {Sartori}, \citenamefont {Neumann}, \citenamefont {Sourjik},\ and\ \citenamefont {Tu}}]{lan2012energy}%
  \BibitemOpen
  \bibfield  {author} {\bibinfo {author} {\bibfnamefont {G.}~\bibnamefont {Lan}}, \bibinfo {author} {\bibfnamefont {P.}~\bibnamefont {Sartori}}, \bibinfo {author} {\bibfnamefont {S.}~\bibnamefont {Neumann}}, \bibinfo {author} {\bibfnamefont {V.}~\bibnamefont {Sourjik}},\ and\ \bibinfo {author} {\bibfnamefont {Y.}~\bibnamefont {Tu}},\ }\bibfield  {title} {\bibinfo {title} {The energy--speed--accuracy trade-off in sensory adaptation},\ }\href@noop {} {\bibfield  {journal} {\bibinfo  {journal} {Nature physics}\ }\textbf {\bibinfo {volume} {8}},\ \bibinfo {pages} {422} (\bibinfo {year} {2012})}\BibitemShut {NoStop}%
\bibitem [{\citenamefont {Conti}\ and\ \citenamefont {Mora}(2022)}]{conti2022nonequilibrium}%
  \BibitemOpen
  \bibfield  {author} {\bibinfo {author} {\bibfnamefont {D.}~\bibnamefont {Conti}}\ and\ \bibinfo {author} {\bibfnamefont {T.}~\bibnamefont {Mora}},\ }\bibfield  {title} {\bibinfo {title} {Nonequilibrium dynamics of adaptation in sensory systems},\ }\href@noop {} {\bibfield  {journal} {\bibinfo  {journal} {Physical Review E}\ }\textbf {\bibinfo {volume} {106}},\ \bibinfo {pages} {054404} (\bibinfo {year} {2022})}\BibitemShut {NoStop}%
\bibitem [{\citenamefont {Pittendrigh}(1954)}]{pittendrigh1954temperature}%
  \BibitemOpen
  \bibfield  {author} {\bibinfo {author} {\bibfnamefont {C.~S.}\ \bibnamefont {Pittendrigh}},\ }\bibfield  {title} {\bibinfo {title} {On temperature independence in the clock system controlling emergence time in drosophila},\ }\href@noop {} {\bibfield  {journal} {\bibinfo  {journal} {Proceedings of the National Academy of Sciences}\ }\textbf {\bibinfo {volume} {40}},\ \bibinfo {pages} {1018} (\bibinfo {year} {1954})}\BibitemShut {NoStop}%
\bibitem [{\citenamefont {Johnson}\ and\ \citenamefont {Rust}(2021)}]{johnson2021circadian}%
  \BibitemOpen
  \bibfield  {author} {\bibinfo {author} {\bibfnamefont {C.~H.}\ \bibnamefont {Johnson}}\ and\ \bibinfo {author} {\bibfnamefont {M.~J.}\ \bibnamefont {Rust}},\ }\href@noop {} {\emph {\bibinfo {title} {Circadian rhythms in bacteria and microbiomes}}},\ Vol.\ \bibinfo {volume} {409}\ (\bibinfo  {publisher} {Springer},\ \bibinfo {year} {2021})\BibitemShut {NoStop}%
\bibitem [{\citenamefont {Hogenesch}\ and\ \citenamefont {Ueda}(2011)}]{hogenesch2011understanding}%
  \BibitemOpen
  \bibfield  {author} {\bibinfo {author} {\bibfnamefont {J.~B.}\ \bibnamefont {Hogenesch}}\ and\ \bibinfo {author} {\bibfnamefont {H.~R.}\ \bibnamefont {Ueda}},\ }\bibfield  {title} {\bibinfo {title} {Understanding systems-level properties: timely stories from the study of clocks},\ }\href@noop {} {\bibfield  {journal} {\bibinfo  {journal} {Nature Reviews Genetics}\ }\textbf {\bibinfo {volume} {12}},\ \bibinfo {pages} {407} (\bibinfo {year} {2011})}\BibitemShut {NoStop}%
\bibitem [{\citenamefont {Ay}\ and\ \citenamefont {Krakauer}(2007)}]{ay2007geometric}%
  \BibitemOpen
  \bibfield  {author} {\bibinfo {author} {\bibfnamefont {N.}~\bibnamefont {Ay}}\ and\ \bibinfo {author} {\bibfnamefont {D.~C.}\ \bibnamefont {Krakauer}},\ }\bibfield  {title} {\bibinfo {title} {Geometric robustness theory and biological networks},\ }\href@noop {} {\bibfield  {journal} {\bibinfo  {journal} {Theory in biosciences}\ }\textbf {\bibinfo {volume} {125}},\ \bibinfo {pages} {93} (\bibinfo {year} {2007})}\BibitemShut {NoStop}%
\bibitem [{\citenamefont {Fu}\ \emph {et~al.}(2024)\citenamefont {Fu}, \citenamefont {Fei}, \citenamefont {Ouyang},\ and\ \citenamefont {Tu}}]{fu2024temperature}%
  \BibitemOpen
  \bibfield  {author} {\bibinfo {author} {\bibfnamefont {H.}~\bibnamefont {Fu}}, \bibinfo {author} {\bibfnamefont {C.}~\bibnamefont {Fei}}, \bibinfo {author} {\bibfnamefont {Q.}~\bibnamefont {Ouyang}},\ and\ \bibinfo {author} {\bibfnamefont {Y.}~\bibnamefont {Tu}},\ }\bibfield  {title} {\bibinfo {title} {Temperature compensation through kinetic regulation in biochemical oscillators},\ }\href@noop {} {\bibfield  {journal} {\bibinfo  {journal} {arXiv preprint arXiv:2401.13960}\ } (\bibinfo {year} {2024})}\BibitemShut {NoStop}%
\bibitem [{\citenamefont {Maes}(2020)}]{maes2020response}%
  \BibitemOpen
  \bibfield  {author} {\bibinfo {author} {\bibfnamefont {C.}~\bibnamefont {Maes}},\ }\bibfield  {title} {\bibinfo {title} {Response theory: a trajectory-based approach},\ }\href@noop {} {\bibfield  {journal} {\bibinfo  {journal} {Frontiers in Physics}\ }\textbf {\bibinfo {volume} {8}},\ \bibinfo {pages} {229} (\bibinfo {year} {2020})}\BibitemShut {NoStop}%
\bibitem [{\citenamefont {Maes}\ \emph {et~al.}(2013)\citenamefont {Maes}, \citenamefont {Safaverdi}, \citenamefont {Visco},\ and\ \citenamefont {Van~Wijland}}]{maes2013fluctuation}%
  \BibitemOpen
  \bibfield  {author} {\bibinfo {author} {\bibfnamefont {C.}~\bibnamefont {Maes}}, \bibinfo {author} {\bibfnamefont {S.}~\bibnamefont {Safaverdi}}, \bibinfo {author} {\bibfnamefont {P.}~\bibnamefont {Visco}},\ and\ \bibinfo {author} {\bibfnamefont {F.}~\bibnamefont {Van~Wijland}},\ }\bibfield  {title} {\bibinfo {title} {Fluctuation-response relations for nonequilibrium diffusions with memory},\ }\href@noop {} {\bibfield  {journal} {\bibinfo  {journal} {Physical Review E—Statistical, Nonlinear, and Soft Matter Physics}\ }\textbf {\bibinfo {volume} {87}},\ \bibinfo {pages} {022125} (\bibinfo {year} {2013})}\BibitemShut {NoStop}%
\bibitem [{\citenamefont {Baiesi}\ \emph {et~al.}(2009)\citenamefont {Baiesi}, \citenamefont {Maes},\ and\ \citenamefont {Wynants}}]{baiesi2009fluctuations}%
  \BibitemOpen
  \bibfield  {author} {\bibinfo {author} {\bibfnamefont {M.}~\bibnamefont {Baiesi}}, \bibinfo {author} {\bibfnamefont {C.}~\bibnamefont {Maes}},\ and\ \bibinfo {author} {\bibfnamefont {B.}~\bibnamefont {Wynants}},\ }\bibfield  {title} {\bibinfo {title} {Fluctuations and response of nonequilibrium states},\ }\href@noop {} {\bibfield  {journal} {\bibinfo  {journal} {Physical review letters}\ }\textbf {\bibinfo {volume} {103}},\ \bibinfo {pages} {010602} (\bibinfo {year} {2009})}\BibitemShut {NoStop}%
\bibitem [{\citenamefont {Baiesi}\ and\ \citenamefont {Maes}(2013)}]{baiesi2013update}%
  \BibitemOpen
  \bibfield  {author} {\bibinfo {author} {\bibfnamefont {M.}~\bibnamefont {Baiesi}}\ and\ \bibinfo {author} {\bibfnamefont {C.}~\bibnamefont {Maes}},\ }\bibfield  {title} {\bibinfo {title} {An update on the nonequilibrium linear response},\ }\href@noop {} {\bibfield  {journal} {\bibinfo  {journal} {New Journal of Physics}\ }\textbf {\bibinfo {volume} {15}},\ \bibinfo {pages} {013004} (\bibinfo {year} {2013})}\BibitemShut {NoStop}%
\bibitem [{\citenamefont {Seifert}\ and\ \citenamefont {Speck}(2010)}]{seifert2010fluctuation}%
  \BibitemOpen
  \bibfield  {author} {\bibinfo {author} {\bibfnamefont {U.}~\bibnamefont {Seifert}}\ and\ \bibinfo {author} {\bibfnamefont {T.}~\bibnamefont {Speck}},\ }\bibfield  {title} {\bibinfo {title} {Fluctuation-dissipation theorem in nonequilibrium steady states},\ }\href@noop {} {\bibfield  {journal} {\bibinfo  {journal} {EPL (Europhysics Letters)}\ }\textbf {\bibinfo {volume} {89}},\ \bibinfo {pages} {10007} (\bibinfo {year} {2010})}\BibitemShut {NoStop}%
\bibitem [{\citenamefont {Pagare}\ \emph {et~al.}(2024)\citenamefont {Pagare}, \citenamefont {Zhang}, \citenamefont {Zheng},\ and\ \citenamefont {Lu}}]{pagare2024stochastic}%
  \BibitemOpen
  \bibfield  {author} {\bibinfo {author} {\bibfnamefont {A.}~\bibnamefont {Pagare}}, \bibinfo {author} {\bibfnamefont {Z.}~\bibnamefont {Zhang}}, \bibinfo {author} {\bibfnamefont {J.}~\bibnamefont {Zheng}},\ and\ \bibinfo {author} {\bibfnamefont {Z.}~\bibnamefont {Lu}},\ }\bibfield  {title} {\bibinfo {title} {Stochastic distinguishability of markovian trajectories},\ }\href@noop {} {\bibfield  {journal} {\bibinfo  {journal} {The Journal of Chemical Physics}\ }\textbf {\bibinfo {volume} {160}} (\bibinfo {year} {2024})}\BibitemShut {NoStop}%
\bibitem [{\citenamefont {Zheng}\ and\ \citenamefont {Lu}(2025{\natexlab{a}})}]{zheng2025nonequilibrium}%
  \BibitemOpen
  \bibfield  {author} {\bibinfo {author} {\bibfnamefont {J.}~\bibnamefont {Zheng}}\ and\ \bibinfo {author} {\bibfnamefont {Z.}~\bibnamefont {Lu}},\ }\bibfield  {title} {\bibinfo {title} {Nonequilibrium macroscopic response relations for counting statistics},\ }\href@noop {} {\bibfield  {journal} {\bibinfo  {journal} {arXiv preprint arXiv:2511.02041}\ } (\bibinfo {year} {2025}{\natexlab{a}})}\BibitemShut {NoStop}%
\bibitem [{\citenamefont {Dechant}\ and\ \citenamefont {Sasa}(2020)}]{dechant2020fluctuation}%
  \BibitemOpen
  \bibfield  {author} {\bibinfo {author} {\bibfnamefont {A.}~\bibnamefont {Dechant}}\ and\ \bibinfo {author} {\bibfnamefont {S.-i.}\ \bibnamefont {Sasa}},\ }\bibfield  {title} {\bibinfo {title} {Fluctuation--response inequality out of equilibrium},\ }\href@noop {} {\bibfield  {journal} {\bibinfo  {journal} {Proceedings of the National Academy of Sciences}\ }\textbf {\bibinfo {volume} {117}},\ \bibinfo {pages} {6430} (\bibinfo {year} {2020})}\BibitemShut {NoStop}%
\bibitem [{\citenamefont {Zheng}\ and\ \citenamefont {Lu}(2025{\natexlab{b}})}]{zheng2025nonlinear}%
  \BibitemOpen
  \bibfield  {author} {\bibinfo {author} {\bibfnamefont {J.}~\bibnamefont {Zheng}}\ and\ \bibinfo {author} {\bibfnamefont {Z.}~\bibnamefont {Lu}},\ }\bibfield  {title} {\bibinfo {title} {Nonlinear response relations and fluctuation-response inequalities for nonequilibrium stochastic systems},\ }\href@noop {} {\bibfield  {journal} {\bibinfo  {journal} {arXiv preprint arXiv:2509.19606}\ } (\bibinfo {year} {2025}{\natexlab{b}})}\BibitemShut {NoStop}%
\bibitem [{\citenamefont {Zheng}\ and\ \citenamefont {Lu}(2025{\natexlab{c}})}]{zheng2025universal}%
  \BibitemOpen
  \bibfield  {author} {\bibinfo {author} {\bibfnamefont {J.}~\bibnamefont {Zheng}}\ and\ \bibinfo {author} {\bibfnamefont {Z.}~\bibnamefont {Lu}},\ }\bibfield  {title} {\bibinfo {title} {Universal response inequalities beyond steady states via trajectory information geometry},\ }\href@noop {} {\bibfield  {journal} {\bibinfo  {journal} {Physical Review E}\ }\textbf {\bibinfo {volume} {112}},\ \bibinfo {pages} {L012103} (\bibinfo {year} {2025}{\natexlab{c}})}\BibitemShut {NoStop}%
\bibitem [{\citenamefont {Zheng}\ and\ \citenamefont {Lu}(2025{\natexlab{d}})}]{zheng2025unified}%
  \BibitemOpen
  \bibfield  {author} {\bibinfo {author} {\bibfnamefont {J.}~\bibnamefont {Zheng}}\ and\ \bibinfo {author} {\bibfnamefont {Z.}~\bibnamefont {Lu}},\ }\bibfield  {title} {\bibinfo {title} {Unified linear fluctuation-response theory arbitrarily far from equilibrium},\ }\href@noop {} {\bibfield  {journal} {\bibinfo  {journal} {Physical Review E}\ }\textbf {\bibinfo {volume} {112}},\ \bibinfo {pages} {064103} (\bibinfo {year} {2025}{\natexlab{d}})}\BibitemShut {NoStop}%
\bibitem [{\citenamefont {Zheng}\ and\ \citenamefont {Lu}(2026{\natexlab{a}})}]{zheng2026thermodynamic}%
  \BibitemOpen
  \bibfield  {author} {\bibinfo {author} {\bibfnamefont {J.}~\bibnamefont {Zheng}}\ and\ \bibinfo {author} {\bibfnamefont {Z.}~\bibnamefont {Lu}},\ }\bibfield  {title} {\bibinfo {title} {Thermodynamic and kinetic bounds for finite-frequency fluctuation-response},\ }\href@noop {} {\bibfield  {journal} {\bibinfo  {journal} {arXiv preprint arXiv:2602.18631}\ } (\bibinfo {year} {2026}{\natexlab{a}})}\BibitemShut {NoStop}%
\bibitem [{\citenamefont {Kwon}\ \emph {et~al.}(2025)\citenamefont {Kwon}, \citenamefont {Chun}, \citenamefont {Park},\ and\ \citenamefont {Lee}}]{kwon2025fluctuation}%
  \BibitemOpen
  \bibfield  {author} {\bibinfo {author} {\bibfnamefont {E.}~\bibnamefont {Kwon}}, \bibinfo {author} {\bibfnamefont {H.-M.}\ \bibnamefont {Chun}}, \bibinfo {author} {\bibfnamefont {H.}~\bibnamefont {Park}},\ and\ \bibinfo {author} {\bibfnamefont {J.~S.}\ \bibnamefont {Lee}},\ }\bibfield  {title} {\bibinfo {title} {Fluctuation-response inequalities for kinetic and entropic perturbations},\ }\href@noop {} {\bibfield  {journal} {\bibinfo  {journal} {Physical Review Letters}\ }\textbf {\bibinfo {volume} {135}},\ \bibinfo {pages} {097101} (\bibinfo {year} {2025})}\BibitemShut {NoStop}%
\bibitem [{\citenamefont {Lee}\ \emph {et~al.}(2021)\citenamefont {Lee}, \citenamefont {Park},\ and\ \citenamefont {Park}}]{lee2021universal}%
  \BibitemOpen
  \bibfield  {author} {\bibinfo {author} {\bibfnamefont {J.~S.}\ \bibnamefont {Lee}}, \bibinfo {author} {\bibfnamefont {J.-M.}\ \bibnamefont {Park}},\ and\ \bibinfo {author} {\bibfnamefont {H.}~\bibnamefont {Park}},\ }\bibfield  {title} {\bibinfo {title} {Universal form of thermodynamic uncertainty relation for langevin dynamics},\ }\href@noop {} {\bibfield  {journal} {\bibinfo  {journal} {Physical Review E}\ }\textbf {\bibinfo {volume} {104}},\ \bibinfo {pages} {L052102} (\bibinfo {year} {2021})}\BibitemShut {NoStop}%
\bibitem [{\citenamefont {Dechant}(2025)}]{dechant2025finite}%
  \BibitemOpen
  \bibfield  {author} {\bibinfo {author} {\bibfnamefont {A.}~\bibnamefont {Dechant}},\ }\bibfield  {title} {\bibinfo {title} {Finite-frequency fluctuation-response inequality},\ }\href@noop {} {\bibfield  {journal} {\bibinfo  {journal} {arXiv preprint arXiv:2510.15228}\ } (\bibinfo {year} {2025})}\BibitemShut {NoStop}%
\bibitem [{\citenamefont {Hasegawa}\ and\ \citenamefont {Van~Vu}(2019)}]{hasegawa2019uncertainty}%
  \BibitemOpen
  \bibfield  {author} {\bibinfo {author} {\bibfnamefont {Y.}~\bibnamefont {Hasegawa}}\ and\ \bibinfo {author} {\bibfnamefont {T.}~\bibnamefont {Van~Vu}},\ }\bibfield  {title} {\bibinfo {title} {Uncertainty relations in stochastic processes: An information inequality approach},\ }\href@noop {} {\bibfield  {journal} {\bibinfo  {journal} {Physical Review E}\ }\textbf {\bibinfo {volume} {99}},\ \bibinfo {pages} {062126} (\bibinfo {year} {2019})}\BibitemShut {NoStop}%
\bibitem [{\citenamefont {Van~Vu}(2025)}]{van2025fundamental}%
  \BibitemOpen
  \bibfield  {author} {\bibinfo {author} {\bibfnamefont {T.}~\bibnamefont {Van~Vu}},\ }\bibfield  {title} {\bibinfo {title} {Fundamental bounds on precision and response for quantum trajectory observables},\ }\href@noop {} {\bibfield  {journal} {\bibinfo  {journal} {PRX Quantum}\ }\textbf {\bibinfo {volume} {6}},\ \bibinfo {pages} {010343} (\bibinfo {year} {2025})}\BibitemShut {NoStop}%
\bibitem [{\citenamefont {Aslyamov}\ \emph {et~al.}(2025)\citenamefont {Aslyamov}, \citenamefont {Ptaszy{\'n}ski},\ and\ \citenamefont {Esposito}}]{aslyamov2025nonequilibrium}%
  \BibitemOpen
  \bibfield  {author} {\bibinfo {author} {\bibfnamefont {T.}~\bibnamefont {Aslyamov}}, \bibinfo {author} {\bibfnamefont {K.}~\bibnamefont {Ptaszy{\'n}ski}},\ and\ \bibinfo {author} {\bibfnamefont {M.}~\bibnamefont {Esposito}},\ }\bibfield  {title} {\bibinfo {title} {Nonequilibrium fluctuation-response relations: From identities to bounds},\ }\href@noop {} {\bibfield  {journal} {\bibinfo  {journal} {Physical Review Letters}\ }\textbf {\bibinfo {volume} {134}},\ \bibinfo {pages} {157101} (\bibinfo {year} {2025})}\BibitemShut {NoStop}%
\bibitem [{\citenamefont {Aslyamov}\ \emph {et~al.}(2026)\citenamefont {Aslyamov}, \citenamefont {Ptaszy{\'n}ski},\ and\ \citenamefont {Esposito}}]{aslyamov2026macroscopic}%
  \BibitemOpen
  \bibfield  {author} {\bibinfo {author} {\bibfnamefont {T.}~\bibnamefont {Aslyamov}}, \bibinfo {author} {\bibfnamefont {K.}~\bibnamefont {Ptaszy{\'n}ski}},\ and\ \bibinfo {author} {\bibfnamefont {M.}~\bibnamefont {Esposito}},\ }\bibfield  {title} {\bibinfo {title} {Macroscopic fluctuation-response theory and its use for gene regulatory networks},\ }\href@noop {} {\bibfield  {journal} {\bibinfo  {journal} {Physical Review Letters}\ }\textbf {\bibinfo {volume} {136}},\ \bibinfo {pages} {067102} (\bibinfo {year} {2026})}\BibitemShut {NoStop}%
\bibitem [{\citenamefont {Ptaszy{\'n}ski}\ \emph {et~al.}(2026)\citenamefont {Ptaszy{\'n}ski}, \citenamefont {Aslyamov},\ and\ \citenamefont {Esposito}}]{ptaszynski2026nonequilibrium}%
  \BibitemOpen
  \bibfield  {author} {\bibinfo {author} {\bibfnamefont {K.}~\bibnamefont {Ptaszy{\'n}ski}}, \bibinfo {author} {\bibfnamefont {T.}~\bibnamefont {Aslyamov}},\ and\ \bibinfo {author} {\bibfnamefont {M.}~\bibnamefont {Esposito}},\ }\bibfield  {title} {\bibinfo {title} {Nonequilibrium fluctuation-response relations for state-current correlations},\ }\href@noop {} {\bibfield  {journal} {\bibinfo  {journal} {Physical Review E}\ }\textbf {\bibinfo {volume} {113}},\ \bibinfo {pages} {024131} (\bibinfo {year} {2026})}\BibitemShut {NoStop}%
\bibitem [{\citenamefont {Harunari}\ \emph {et~al.}(2024)\citenamefont {Harunari}, \citenamefont {Dal~Cengio}, \citenamefont {Lecomte},\ and\ \citenamefont {Polettini}}]{harunari2024mutual}%
  \BibitemOpen
  \bibfield  {author} {\bibinfo {author} {\bibfnamefont {P.~E.}\ \bibnamefont {Harunari}}, \bibinfo {author} {\bibfnamefont {S.}~\bibnamefont {Dal~Cengio}}, \bibinfo {author} {\bibfnamefont {V.}~\bibnamefont {Lecomte}},\ and\ \bibinfo {author} {\bibfnamefont {M.}~\bibnamefont {Polettini}},\ }\bibfield  {title} {\bibinfo {title} {Mutual linearity of nonequilibrium network currents},\ }\href@noop {} {\bibfield  {journal} {\bibinfo  {journal} {Physical Review Letters}\ }\textbf {\bibinfo {volume} {133}},\ \bibinfo {pages} {047401} (\bibinfo {year} {2024})}\BibitemShut {NoStop}%
\bibitem [{\citenamefont {Bebon}\ and\ \citenamefont {Speck}(2026)}]{bebon2026mutual}%
  \BibitemOpen
  \bibfield  {author} {\bibinfo {author} {\bibfnamefont {R.}~\bibnamefont {Bebon}}\ and\ \bibinfo {author} {\bibfnamefont {T.}~\bibnamefont {Speck}},\ }\bibfield  {title} {\bibinfo {title} {Mutual linearity is a generic property of steady-state markov networks},\ }\href@noop {} {\bibfield  {journal} {\bibinfo  {journal} {Physical Review Letters}\ }\textbf {\bibinfo {volume} {136}},\ \bibinfo {pages} {137401} (\bibinfo {year} {2026})}\BibitemShut {NoStop}%
\bibitem [{\citenamefont {Chun}\ and\ \citenamefont {Horowitz}(2023)}]{chun2023trade}%
  \BibitemOpen
  \bibfield  {author} {\bibinfo {author} {\bibfnamefont {H.-M.}\ \bibnamefont {Chun}}\ and\ \bibinfo {author} {\bibfnamefont {J.~M.}\ \bibnamefont {Horowitz}},\ }\bibfield  {title} {\bibinfo {title} {Trade-offs between number fluctuations and response in nonequilibrium chemical reaction networks},\ }\href@noop {} {\bibfield  {journal} {\bibinfo  {journal} {The Journal of Chemical Physics}\ }\textbf {\bibinfo {volume} {158}} (\bibinfo {year} {2023})}\BibitemShut {NoStop}%
\bibitem [{\citenamefont {Fernandes~Martins}\ and\ \citenamefont {Horowitz}(2023)}]{fernandes2023topologically}%
  \BibitemOpen
  \bibfield  {author} {\bibinfo {author} {\bibfnamefont {G.}~\bibnamefont {Fernandes~Martins}}\ and\ \bibinfo {author} {\bibfnamefont {J.~M.}\ \bibnamefont {Horowitz}},\ }\bibfield  {title} {\bibinfo {title} {Topologically constrained fluctuations and thermodynamics regulate nonequilibrium response},\ }\href@noop {} {\bibfield  {journal} {\bibinfo  {journal} {Physical Review E}\ }\textbf {\bibinfo {volume} {108}},\ \bibinfo {pages} {044113} (\bibinfo {year} {2023})}\BibitemShut {NoStop}%
\bibitem [{\citenamefont {Owen}\ \emph {et~al.}(2020)\citenamefont {Owen}, \citenamefont {Gingrich},\ and\ \citenamefont {Horowitz}}]{owen2020universal}%
  \BibitemOpen
  \bibfield  {author} {\bibinfo {author} {\bibfnamefont {J.~A.}\ \bibnamefont {Owen}}, \bibinfo {author} {\bibfnamefont {T.~R.}\ \bibnamefont {Gingrich}},\ and\ \bibinfo {author} {\bibfnamefont {J.~M.}\ \bibnamefont {Horowitz}},\ }\bibfield  {title} {\bibinfo {title} {Universal thermodynamic bounds on nonequilibrium response with biochemical applications},\ }\href@noop {} {\bibfield  {journal} {\bibinfo  {journal} {Physical Review X}\ }\textbf {\bibinfo {volume} {10}},\ \bibinfo {pages} {011066} (\bibinfo {year} {2020})}\BibitemShut {NoStop}%
\bibitem [{\citenamefont {Floyd}\ \emph {et~al.}(2025)\citenamefont {Floyd}, \citenamefont {Dinner},\ and\ \citenamefont {Vaikuntanathan}}]{floyd2025local}%
  \BibitemOpen
  \bibfield  {author} {\bibinfo {author} {\bibfnamefont {C.}~\bibnamefont {Floyd}}, \bibinfo {author} {\bibfnamefont {A.~R.}\ \bibnamefont {Dinner}},\ and\ \bibinfo {author} {\bibfnamefont {S.}~\bibnamefont {Vaikuntanathan}},\ }\bibfield  {title} {\bibinfo {title} {Local imperfect feedback control in non-equilibrium biophysical systems enabled by thermodynamic constraints},\ }\href@noop {} {\bibfield  {journal} {\bibinfo  {journal} {arXiv preprint arXiv:2507.07295}\ } (\bibinfo {year} {2025})}\BibitemShut {NoStop}%
\bibitem [{\citenamefont {Dal~Cengio}\ \emph {et~al.}(2025)\citenamefont {Dal~Cengio}, \citenamefont {Harunari}, \citenamefont {Lecomte},\ and\ \citenamefont {Polettini}}]{dal2025mutual}%
  \BibitemOpen
  \bibfield  {author} {\bibinfo {author} {\bibfnamefont {S.}~\bibnamefont {Dal~Cengio}}, \bibinfo {author} {\bibfnamefont {P.~E.}\ \bibnamefont {Harunari}}, \bibinfo {author} {\bibfnamefont {V.}~\bibnamefont {Lecomte}},\ and\ \bibinfo {author} {\bibfnamefont {M.}~\bibnamefont {Polettini}},\ }\bibfield  {title} {\bibinfo {title} {Mutual multilinearity of nonequilibrium network currents},\ }\href@noop {} {\bibfield  {journal} {\bibinfo  {journal} {SciPost Physics}\ }\textbf {\bibinfo {volume} {19}},\ \bibinfo {pages} {111} (\bibinfo {year} {2025})}\BibitemShut {NoStop}%
\bibitem [{\citenamefont {Khodabandehlou}\ \emph {et~al.}(2025)\citenamefont {Khodabandehlou}, \citenamefont {Maes},\ and\ \citenamefont {Neto{\v{c}}n{\`y}}}]{khodabandehlou2025affine}%
  \BibitemOpen
  \bibfield  {author} {\bibinfo {author} {\bibfnamefont {F.}~\bibnamefont {Khodabandehlou}}, \bibinfo {author} {\bibfnamefont {C.}~\bibnamefont {Maes}},\ and\ \bibinfo {author} {\bibfnamefont {K.}~\bibnamefont {Neto{\v{c}}n{\`y}}},\ }\bibfield  {title} {\bibinfo {title} {Affine relationships between steady currents},\ }\href@noop {} {\bibfield  {journal} {\bibinfo  {journal} {Journal of Physics A: Mathematical and Theoretical}\ }\textbf {\bibinfo {volume} {58}},\ \bibinfo {pages} {155002} (\bibinfo {year} {2025})}\BibitemShut {NoStop}%
\bibitem [{\citenamefont {Aslyamov}\ and\ \citenamefont {Esposito}(2024)}]{aslyamov2024general}%
  \BibitemOpen
  \bibfield  {author} {\bibinfo {author} {\bibfnamefont {T.}~\bibnamefont {Aslyamov}}\ and\ \bibinfo {author} {\bibfnamefont {M.}~\bibnamefont {Esposito}},\ }\bibfield  {title} {\bibinfo {title} {General theory of static response for markov jump processes},\ }\href@noop {} {\bibfield  {journal} {\bibinfo  {journal} {Physical Review Letters}\ }\textbf {\bibinfo {volume} {133}},\ \bibinfo {pages} {107103} (\bibinfo {year} {2024})}\BibitemShut {NoStop}%
\bibitem [{\citenamefont {Polettini}\ \emph {et~al.}(2025)\citenamefont {Polettini}, \citenamefont {Harunari}, \citenamefont {Cengio},\ and\ \citenamefont {Lecomte}}]{polettini2025coplanarity}%
  \BibitemOpen
  \bibfield  {author} {\bibinfo {author} {\bibfnamefont {M.}~\bibnamefont {Polettini}}, \bibinfo {author} {\bibfnamefont {P.~E.}\ \bibnamefont {Harunari}}, \bibinfo {author} {\bibfnamefont {S.~D.}\ \bibnamefont {Cengio}},\ and\ \bibinfo {author} {\bibfnamefont {V.}~\bibnamefont {Lecomte}},\ }\bibfield  {title} {\bibinfo {title} {Coplanarity of rooted spanning-tree vectors: M. polettini et al.},\ }\href@noop {} {\bibfield  {journal} {\bibinfo  {journal} {Letters in Mathematical Physics}\ }\textbf {\bibinfo {volume} {116}},\ \bibinfo {pages} {1} (\bibinfo {year} {2025})}\BibitemShut {NoStop}%
\bibitem [{\citenamefont {Meyer}(1962)}]{meyer1962decomposition}%
  \BibitemOpen
  \bibfield  {author} {\bibinfo {author} {\bibfnamefont {P.-A.}\ \bibnamefont {Meyer}},\ }\bibfield  {title} {\bibinfo {title} {A decomposition theorem for supermartingales},\ }\href@noop {} {\bibfield  {journal} {\bibinfo  {journal} {Illinois Journal of Mathematics}\ }\textbf {\bibinfo {volume} {6}},\ \bibinfo {pages} {193} (\bibinfo {year} {1962})}\BibitemShut {NoStop}%
\bibitem [{\citenamefont {Stutzer}\ \emph {et~al.}(2025)\citenamefont {Stutzer}, \citenamefont {Dieball},\ and\ \citenamefont {Godec}}]{stutzer2025stochastic}%
  \BibitemOpen
  \bibfield  {author} {\bibinfo {author} {\bibfnamefont {L.~T.}\ \bibnamefont {Stutzer}}, \bibinfo {author} {\bibfnamefont {C.}~\bibnamefont {Dieball}},\ and\ \bibinfo {author} {\bibfnamefont {A.}~\bibnamefont {Godec}},\ }\bibfield  {title} {\bibinfo {title} {Stochastic calculus for pathwise observables of markov-jump processes: Unification of diffusion and jump dynamics},\ }\href@noop {} {\bibfield  {journal} {\bibinfo  {journal} {arXiv preprint arXiv:2508.04647}\ } (\bibinfo {year} {2025})}\BibitemShut {NoStop}%
\bibitem [{\citenamefont {Peliti}\ and\ \citenamefont {Pigolotti}(2021)}]{peliti2021stochastic}%
  \BibitemOpen
  \bibfield  {author} {\bibinfo {author} {\bibfnamefont {L.}~\bibnamefont {Peliti}}\ and\ \bibinfo {author} {\bibfnamefont {S.}~\bibnamefont {Pigolotti}},\ }\href@noop {} {\emph {\bibinfo {title} {Stochastic thermodynamics: an introduction}}}\ (\bibinfo  {publisher} {Princeton University Press},\ \bibinfo {year} {2021})\BibitemShut {NoStop}%
\bibitem [{\citenamefont {Gillespie}(1977)}]{gillespie1977exact}%
  \BibitemOpen
  \bibfield  {author} {\bibinfo {author} {\bibfnamefont {D.~T.}\ \bibnamefont {Gillespie}},\ }\bibfield  {title} {\bibinfo {title} {Exact stochastic simulation of coupled chemical reactions},\ }\href@noop {} {\bibfield  {journal} {\bibinfo  {journal} {The journal of physical chemistry}\ }\textbf {\bibinfo {volume} {81}},\ \bibinfo {pages} {2340} (\bibinfo {year} {1977})}\BibitemShut {NoStop}%
\bibitem [{\citenamefont {Zheng}\ and\ \citenamefont {Lu}(2026{\natexlab{b}})}]{zheng2026mutual}%
  \BibitemOpen
  \bibfield  {author} {\bibinfo {author} {\bibfnamefont {J.}~\bibnamefont {Zheng}}\ and\ \bibinfo {author} {\bibfnamefont {Z.}~\bibnamefont {Lu}},\ }\bibfield  {title} {\bibinfo {title} {Mutual linearity in nonequilibrium langevin dynamics},\ }\href@noop {} {\bibfield  {journal} {\bibinfo  {journal} {arXiv preprint arXiv:2605.07949}\ } (\bibinfo {year} {2026}{\natexlab{b}})}\BibitemShut {NoStop}%
\bibitem [{\citenamefont {Dieball}\ and\ \citenamefont {Godec}(2023)}]{dieball2023direct}%
  \BibitemOpen
  \bibfield  {author} {\bibinfo {author} {\bibfnamefont {C.}~\bibnamefont {Dieball}}\ and\ \bibinfo {author} {\bibfnamefont {A.}~\bibnamefont {Godec}},\ }\bibfield  {title} {\bibinfo {title} {Direct route to thermodynamic uncertainty relations and their saturation},\ }\href@noop {} {\bibfield  {journal} {\bibinfo  {journal} {Physical Review Letters}\ }\textbf {\bibinfo {volume} {130}},\ \bibinfo {pages} {087101} (\bibinfo {year} {2023})}\BibitemShut {NoStop}%
\bibitem [{\citenamefont {Kwon}\ and\ \citenamefont {Lee}(2025)}]{kwon2025unified}%
  \BibitemOpen
  \bibfield  {author} {\bibinfo {author} {\bibfnamefont {E.}~\bibnamefont {Kwon}}\ and\ \bibinfo {author} {\bibfnamefont {J.~S.}\ \bibnamefont {Lee}},\ }\bibfield  {title} {\bibinfo {title} {A unified framework for classical and quantum uncertainty relations using stochastic representations},\ }\href@noop {} {\bibfield  {journal} {\bibinfo  {journal} {Communications Physics}\ }\textbf {\bibinfo {volume} {8}},\ \bibinfo {pages} {444} (\bibinfo {year} {2025})}\BibitemShut {NoStop}%
\bibitem [{\citenamefont {Zheng}(2026)}]{data}%
  \BibitemOpen
  \bibfield  {author} {\bibinfo {author} {\bibfnamefont {J.}~\bibnamefont {Zheng}},\ }\href@noop {} {} (\bibinfo {year} {2026}),\ \bibinfo {note} {https://github.com/Axeho2/Mutual-Linearity-in-and-out-of-Stationarity-for-Markov-Jump-Processes-A-Trajectory-Based-Approach.}\BibitemShut {Stop}%
\end{thebibliography}%

\end{document}